\newtheorem{theorem}{Theorem}[section]
\newtheorem{remark}[theorem]{Remark}
\newtheorem{lemma}[theorem]{Lemma}
\newtheorem{claim}[theorem]{Claim}
\newtheorem{definition}[theorem]{Definition}
\newcommand{\ket}[1]{|#1\rangle}
\newcommand{\ketbra}[2]{|#1\rangle\langle#2|}
\newcommand{\braket}[2]{\langle#1|#2\rangle}
\newcommand{\norm}[1]{\left\lVert#1\right\rVert}
\newcommand{\C}{\mathbb{C}}
\newcommand{\eps}{\varepsilon}
\renewcommand{\epsilon}{\varepsilon}
\newcommand{\cbra}[1]{\left\{#1\right\}}
\newcommand{\rbra}[1]{\left(#1\right)}
\newcommand{\mathify}[1]{\ifmmode{#1}\else\mbox{$#1$}\fi}
\newcommand{\zone}{\{0, 1\}}
\newcommand{\for}{\mathsf{frOR}}
\let\oldabs\abs
\def\abs{\@ifstar{\oldabs}{\oldabs*}}
\newcommand{\OR}{\mathsf{OR}}
\newcommand{\QPE}{\mathsf{QPE}}
\newcommand{\maxqpe}{\mathsf{maxQPE}}
\newcommand{\dist}{\mathsf{dist}}
\newcommand{\A}{\mathcal{A}}
\newcommand{\R}{\mathbb{R}}
\newcommand{\tO}{\widetilde{O}}
\begin{document}

\title{Tight Bounds for Quantum Phase Estimation and Related Problems}

\author{Nikhil S.~Mande}
\affiliation{University of Liverpool, UK}
\orcid{0000-0002-9520-7340}
\email{mande@liverpool.ac.uk}
\homepage{https://mande-nikhil.github.io/}
\thanks{Part of this work was done while the author was a postdoc at QuSoft and CWI, Amsterdam.}
\author{Ronald de Wolf}
\affiliation{QuSoft, CWI and University of Amsterdam, the Netherlands}
\email{rdewolf@cwi.nl}
\homepage{https://homepages.cwi.nl/~rdewolf/}
\thanks{Partially supported by the Dutch Research Council (NWO/OCW), as part of the Quantum Software Consortium programme (project number 024.003.037).}
\maketitle

\begin{abstract}
    Phase estimation, due to Kitaev [arXiv'95], is one of the most fundamental subroutines in quantum computing, used in Shor's factoring algorithm, optimization algorithms, quantum chemistry algorithms, and many others. In the basic scenario, one is given black-box access to a unitary~$U$, and an eigenstate $\ket{\psi}$ of $U$ with unknown eigenvalue $e^{i\theta}$, and the task is to estimate the eigenphase~$\theta$ within $\pm\delta$, with high probability. The repeated application of $U$ and $U^{-1}$ is typically the most expensive part of phase estimation algorithms, so for us the \emph{cost} of an algorithm will be that number of applications.

    Motivated by the ``guided local Hamiltonian problem'' from quantum chemistry, we tightly characterize the cost of several variants of phase estimation where we are no longer given an arbitrary eigenstate, but are required to estimate the \emph{maximum} eigenphase of $U$, aided by \emph{advice} in the form of copies of a state (or a unitary preparing that state) that is promised to have at least a certain overlap~$\gamma$ with the top eigenspace. 
    We give algorithms and matching lower bounds (up to logarithmic factors) for all ranges of parameters.
    We show a crossover point below which advice is not helpful: $o(1/\gamma^2)$ copies of the advice state (or $o(1/\gamma)$ applications of an advice-preparing unitary) are not significantly better than having no advice at all. We also show that having much more advice (more than $O(1/\gamma^2)$ copies or more than  $O(1/\gamma)$ applications of the advice-preparing unitary) does not  significantly reduce cost, and neither does knowledge of the eigenbasis of $U$. 
As an immediate consequence of a key technical component of our proof, we obtain a lower bound on the complexity of the Unitary recurrence time problem, matching an upper bound of She and Yuen~[ITCS'23] and resolving one of their open questions.
    
    Lastly, we study how efficiently one can reduce the error probability in the basic phase-estimation scenario. We show that a phase-estimation algorithm with precision $\delta$ and error probability $\eps$ has cost $\Omega\left(\frac{1}{\delta}\log\frac{1}{\epsilon}\right)$, matching the obvious way to amplify the basic constant-error-probability phase estimation algorithm. This contrasts with some other scenarios in quantum computing (e.g., search) where error-probability reduction costs only a factor $O(\sqrt{\log(1/\epsilon)})$. Our lower bound uses a variant of the polynomial method with  trigonometric polynomials.
\end{abstract}

\thispagestyle{empty}
\addtocounter{page}{-1}

\newpage

\section{Introduction}

\subsection{Phase estimation}

Kitaev~\cite{Kit95} gave an elegant and efficient quantum algorithm for the task of \emph{phase estimation} nearly 30 years ago. The task is easy to state: given black-box access to a unitary and an eigenstate of it, estimate the phase of the associated eigenvalue. 
Roughly speaking, the standard algorithm for this task sets up a superposition involving many different powers of the unitary to extract many different powers of the eigenvalue, and then uses a quantum Fourier transform to turn that into an estimate of the eigenphase.\footnote{An added advantage of the standard algorithm for phase estimation is that it can also work with a quantum Fourier transform that is correct on average rather than in the worst case~\cite{linden&wolf:avQFT}. However, there are also approaches to phase estimation that avoid the QFT altogether, see e.g.~\cite{rall:phasest}.}
Many of the most prominent quantum algorithms can either be phrased as phase estimation, or use phase estimation as a crucial subroutine. Some examples are Shor's period-finding algorithm~\cite{shor:factoring} as presented in~\cite{cemm:revisited}; approximate counting~\cite{bhmt:countingj} can be done using phase estimation on the unitary of one iteration of Grover's search algorithm~\cite{grover:search}, which also recovers the $O(\sqrt{N})$ complexity for searching an $N$-element unordered search space; the HHL algorithm for solving linear systems of equations estimates eigenvalues in order to invert them~\cite{hhl:lineq}. Applications of phase estimation in quantum chemistry are also very prominent, as discussed below.

More precisely, in the task of phase estimation, we are given black-box access to an $N$-dimensional unitary~$U$ (and a controlled version thereof) and a state $\ket{\psi}$ that satisfies $U\ket{\psi} = e^{i \theta}\ket{\psi}$. Our goal is to output (with probability at least $2/3$) a $\Tilde{\theta} \in [0, 2\pi)$ such that $|\tilde{\theta} - \theta|$ is at most $\delta$ in $\mathbb{R}$~mod $2\pi$. In the basic scenario we are given access to one copy of $\ket{\psi}$, and are allowed to apply $U$ and its inverse. Since the repeated applications of $U$ and $U^{-1}$ are typically the most expensive parts of algorithms for phase estimation, the \emph{cost} we wish to minimize is the number of applications of $U$ and $U^{-1}$. 
We are additionally allowed to apply arbitrary unitaries that do not depend on $U$, at no cost.
Kitaev's algorithm has cost $O(1/\delta)$ in this measure.

\subsection{Phase estimation with advice}

One of the core problems in quantum chemistry is the following: given a classical description of some Hamiltonian $H$ (for instance an ``electronic structure'' Hamiltonian in the form of a small number of local terms), estimate its \emph{ground-state energy}, which is its smallest eigenvalue. If $H$ is normalized such that its eigenvalues are all in $[0,2\pi)$ and we define the unitary $U=e^{iH}$ (which has the same eigenvectors as $H$, with an eigenvalue $\lambda$ of $H$ becoming the eigenvalue $e^{i\lambda}$ for $U$), then finding the ground state energy of $H$ is equivalent to finding the smallest eigenphase of $U$. If we are additionally given a \emph{ground state}~$\ket{\psi}$ (i.e., an eigenstate corresponding to the smallest eigenphase), then phase estimation is tailor-made to estimate the ground-state energy. However, in quantum chemistry it is typically hard to prepare the ground state of $H$, or even something close to it. What can sometimes be done is a relatively cheap preparation of some quantum ``advice state'' that has some non-negligible ``overlap''~$\gamma$ with the ground space, for instance the ``Hartree-Fock state''.
This idea is also behind the Variational Quantum Eigensolver (VQE)~\cite{VQE}.
In the complexity-theoretic context, this problem of ground-state energy estimation for a local Hamiltonian given an advice state, is known as the 
``guided local Hamiltonian problem'', and has received quite some attention recently~\cite{GL23,CFG+23,WFC24} because of its connections with quantum chemistry as well as deep complexity questions such as the PCP conjecture.
These complexity-theoretic results typically focus on completeness for BQP or QMA or QCMA of certain special cases of the guided local Hamiltonian problem, and don't care about polynomial overheads of the cost 
(number of uses of $U$ and $U^{-1}$) 
in the number of qubits $\log N$ or in the parameters $\delta$ and $\gamma$. In contrast, we care here about getting essentially optimal bounds on the cost of phase estimation in various scenarios.

To be more precise, suppose our input unitary is $U = \sum_{j = 0}^{N-1} e^{i\theta_j}\ketbra{u_j}{u_j}$ with each $\theta_j \in [0, 2\pi-2\delta)$.\footnote{\label{note:hannote}By multiplying $U$ with $e^{i\delta}$, we can equivalently assume that all eigenphases are in the interval $[\delta,2\pi-\delta)$, hence $\not\in[-\delta,\delta] \mod 2\pi$. In an earlier version of this paper we did not include this promise in the problem definition. 
However, then the algorithm of Lemma~\ref{lemma: maxthetafind} runs into trouble: if $\theta_{\max}$ is, say, $\pi$, while all other eigenphases are just above~0, then the unitary $V$ in that proof generates a state with non-negligible weight on values $x_k$ that are just below $2\pi$ (since those are good approximations of the just-above-0 eigenphases mod $2\pi$), and the generalized maximum-finding will zoom in on those values, ending up with some number just below~$2\pi$ as its (very wrong) guess for $\theta_{\max}$. We thank Han-Hsuan Lin (personal communication) for alerting us to this issue.}  Let $\theta_{\max} = \max_{j \in \cbra{0, 1, \dots, N-1}}\theta_j$ denote the maximum eigenphase, and let $S$ denote the space spanned by all eigenstates with eigenphase $\theta_{\max}$, i.e., the ``top eigenspace''. Advice is given in the form of a state $\ket{\alpha}$ whose projection on $S$ has squared norm at least $\gamma^2$: $\norm{P_S\ket{\alpha}}^2 \geq \gamma^2$. Note that if $S$ is spanned by a single eigenstate~$\ket{u_{\max}}$, then this condition is the same as $|\braket{\alpha}{u_{\max}}| \geq \gamma$, which is why we call $\gamma$ the \emph{overlap} of the advice state with the target eigenspace. The task $\maxqpe_{N, \delta}$ is to output, with probability at least $2/3$, a $\delta$-precise (in $\R$ mod $2\pi$) estimate of~$\theta_{\max}$.\footnote{It doesn't really matter, but we focus on the \emph{maximum} rather than minimum eigenphase of~$U$ because eigenphase~0 (i.e., eigenvalue~1) is a natural baseline, and we are looking for the eigenphase furthest away from this baseline.}

We will distinguish between the setting where the advice is given in the form of a number of copies of the advice state $\ket{\alpha}$, or the potentially more powerful setting where we can apply (multiple times) a \emph{unitary}~$A$ that prepares $\ket{\alpha}$ from some free-to-prepare initial state, say $\ket{0}$. We would have such a unitary~$A$ for instance if we have a procedure to prepare $\ket{\alpha}$ ourselves in the lab.
We can also distinguish between the situation where the eigenbasis $\ket{u_0},\ldots,\ket{u_N}$ of $U$ is known (say, the computational basis where $\ket{u_j}=\ket{j}$) and the potentially harder situation where the eigenbasis is unknown.
These two binary distinctions give us four different settings.
For each of these settings we determine essentially optimal bounds on the cost of phase estimation, summarized in Table~\ref{table: bounds} 
(see Section~\ref{sec: prelims} for the formal setup). 

\begin{table}[hbt]
\begin{center}
\resizebox{\columnwidth}{!}{%
\begin{tabular}{| c | c | c | c | c | c |}
 \hline
 Row & Basis & Access to advice & Number of accesses & Upper bound & Lower bound \\ [0.5ex] 
 \hline\hline
 1 & known & state & $o\rbra{\frac{1}{\gamma^2}}$ & $\tO\rbra{\frac{\sqrt{N}}{\delta}}$, Lemma~\ref{lemma: odd ub} & $\Omega\rbra{\frac{\sqrt{N}}{\delta}}$, Lemma~\ref{lemma:13lb} \\
 \hline
 2 & known & state & $\Omega\rbra{\frac{1}{\gamma^2}}$ & $\tO\rbra{\frac{1}{\gamma\delta}}$, Lemma~\ref{lemma: 24 ub} & $\Omega\rbra{\frac{1}{\gamma\delta}}$, Lemma~\ref{lemma:24lb} \\
 \hline
 3 & unknown & state & $o\rbra{\frac{1}{\gamma^2}}$ & $\tO\rbra{\frac{\sqrt{N}}{\delta}}$, Lemma~\ref{lemma: odd ub} & $\Omega\rbra{\frac{\sqrt{N}}{\delta}}$, Lemma~\ref{lemma:13lb} \\
 \hline
 4 & unknown & state & $\Omega\rbra{\frac{1}{\gamma^2}}$ & $\tO\rbra{\frac{1}{\gamma\delta}}$, Lemma~\ref{lemma: 24 ub} & $\Omega\rbra{\frac{1}{\gamma\delta}}$, Lemma~\ref{lemma:24lb} \\ 
 \hline
 5 & known & unitary & $o\rbra{\frac{1}{\gamma}}$ & $\tO\rbra{\frac{\sqrt{N}}{\delta}}$, Lemma~\ref{lemma: odd ub} & $\Omega\rbra{\frac{\sqrt{N}}{\delta}}$, Lemma~\ref{lemma:57lb} \\
 \hline
 6 & known & unitary & $\Omega\rbra{\frac{1}{\gamma}}$ & $\tO\rbra{\frac{1}{\gamma\delta}}$, Lemma~\ref{lemma: 68 ub} & $\Omega\rbra{\frac{1}{\gamma\delta}}$, Lemma~\ref{lemma:68lb} \\
 \hline
 7 & unknown & unitary & $o\rbra{\frac{1}{\gamma}}$ & $\tO\rbra{\frac{\sqrt{N}}{\delta}}$, Lemma~\ref{lemma: odd ub} & $\Omega\rbra{\frac{\sqrt{N}}{\delta}}$, Lemma~\ref{lemma:57lb} \\
 \hline
 8 & unknown & unitary & $\Omega\rbra{\frac{1}{\gamma}}$ & $\tO\rbra{\frac{1}{\gamma\delta}}$, Lemma~\ref{lemma: 68 ub} & $\Omega\rbra{\frac{1}{\gamma\delta}}$, Lemma~\ref{lemma:68lb} \\
 \hline
\end{tabular}
}
\caption{\label{table: bounds}
Our results for the cost of $\maxqpe_{N, \delta}$.
We assume $\gamma \geq 1/\sqrt{N}$ since a random state has overlap $1/\sqrt{N}$ with the target eigenspace with high probability, and such a state can be prepared at no cost. The `Basis' column indicates whether the eigenbasis of $U$ is known; `Access to advice' indicates whether we get copies of the advice state or a unitary to prepare it;  `Number of accesses' refers to the number of accesses to advice that we have. The last two columns show our bounds  with references to the lemmas where they are stated and proved. The $\tO(\cdot)$ in the upper-bound column hides a factor $\log N$ for the odd-numbered rows, and $\log(1/\gamma)$ for the even-numbered rows. The lower bounds assume $\delta\in(0,1)$.}
\end{center}
\end{table}

Let us highlight some interesting consequences of our results. First, a little bit of advice is no better than no advice at all: the upper bounds in the odd-numbered rows of Table~\ref{table: bounds} are actually obtained by algorithms that don't use the given advice ($o(1/\gamma^2)$ copies of $\ket{\alpha}$ or $o(1/\gamma)$ applications of $A$ and $A^{-1}$) at all,  yet their costs essentially match our lower bounds for algorithms that use the given advice.%
\footnote{The proofs in Section~\ref{sec: lower} 
yield the same asymptotic lower bounds for algorithms with access to at most $c/\gamma^2$ advice states for Theorem~\ref{thm: gndeltat lower bound}, Rows 1 and 3 of Table~\ref{table: bounds}, and for algorithms with access to at most $c/\gamma$ advice unitaries for Rows 5 and 7 of Table~\ref{table: bounds}, where $c$ is a suitably small 
universal 
constant. We chose to use $o(\cdot)$ here to avoid clutter.}

A second interesting consequence is that too much advice is no better than a moderate amount of advice: the upper bounds in Rows 2 and~4 use $O(1/\gamma^2)$ advice states, and the upper bounds in Rows 6 and 8 use $O(1/\gamma)$ advice unitaries, and using more advice does not reduce the cost further.

Thirdly, it turns out that knowledge of the eigenbasis of $U$ doesn't really help in reducing the cost: the costs in row~1 and row~3 are the same, and similarly for rows 2 vs.~4, 5 vs.~7 and 6 vs.~8.

\paragraph*{Our techniques.}
Our upper bounds use the subroutine of \emph{generalized maximum-finding} of van Apeldoorn, Gily{\'{e}}n, Gribling, and de Wolf~\cite{AGGW20} which allows us to find maximum values in the second register of a two-register superposition even when the first of these two registers has an unknown basis. We derive the upper bound of row~4 from the upper bound of row~8 by using roughly $1/\gamma$ copies of $\ket{\alpha}$ to simulate one reflection around the state~$\ket{\alpha} = A\ket{0}$, using the techniques of Lloyd, Mohseni, and Rebentrost~\cite{lmr:pca}.

Our lower bounds follow via reductions from a fractional version of the Boolean OR function with advice. We show a lower bound for this by a simple modification of the adversary method~\cite{Amb02}, taking into account the input-dependent advice in the initial state, and the fact that applications of $U$ can be made to correspond to ``fractional'' queries.

\paragraph*{Gate-complexity of our algorithms.}
We stated the \emph{cost} (i.e., number of applications of $U$ and $U^{-1}$) of our algorithms here in the upper-bound column of Table~\ref{table: bounds}, not the overall time complexity. However, it is easy to verify that the gate-complexities of our algorithms are only larger than the cost by log-factors in all cases except rows 2 and~4: our algorithms use two main subroutines, which have only small overheads in gate-complexity, namely basic quantum phase estimation~\cite{Kit95} and generalized maximum-finding~\cite{AGGW20}. In contrast, our upper bound for row~4 (and hence for row~2) additionally uses \cite{lmr:pca} $O(1/\gamma)$ times to implement a reflection about the state $\ket{\alpha}$, consuming $O(1/\gamma)$ copies of that state for each reflection. One such reflection then has cost~0 but gate-complexity $\tO(1/\gamma)$, meaning the overall gate-complexity of our algorithm for row~4 is $\tO(1/\gamma\delta + 1/\gamma^2)$ rather than $\tO(1/\gamma\delta)$, which makes a difference if $\gamma\ll\delta$.

\paragraph*{Comparison with related work.}
Some of the results in our table were already (partially) known.
A cost-$\tO(\sqrt{N}/\delta)$ algorithm for the adviceless setting with unknown eigenbasis (implying the upper bounds of rows 1, 3, 5, 7) was originally due to Poulin and Wocjan~\cite{poulin2009GibbsSamplingAndEval}, and subsequently improved in the log-factors by van Apeldoorn et al.~\cite{AGGW20};
the latter algorithm is basically our proof of Lemma~\ref{lemma: odd ub}.
Lin and Tong~\cite{linlin&tong:groundstateprep} (improving upon~\cite{GTC:groundstateprep}) studied the situation with an advice-preparing unitary. Their setting is slightly different from ours, they focus on preparing the ground state\footnote{Because generalized maximum-finding (Lemma~\ref{lemma: genmaxfind}) actually outputs a state in addition to an estimate, our algorithms can be modified to also output a state that is close to the top eigenspace of~$U$.}
 of a given Hamiltonian without a known bound on its spectrum, but~\cite[Theorem~8]{linlin&tong:groundstateprep} implies  
a cost-$O(\log(1/\gamma)\log(1/\delta)\log\log(1/\delta)/\gamma\delta)$ algorithm for our row~8.
Their follow-up paper~\cite{linlin&tong:grounstateearly} further reduces the number of auxiliary qubits with a view to near-term implementation, but does not reduce the cost further.
Our cost-$O(\log(1/\gamma)/\gamma\delta)$ algorithm is slightly better in the log-factors than theirs, and uses quite different techniques (\cite{linlin&tong:groundstateprep} uses quantum singular value transformation~\cite{gilyenea:svtrans}).

On the lower-bound side, $\Omega(1/\delta)$ for the cost of phase estimation has long been known to hold when the success probability is required to be a constant, this follows for instance from  the approximate counting lower bound of Nayak and Wu~\cite{NW99} (see also \cite{Bes05}).
Lin and Tong~\cite[Theorem~10]{linlin&tong:groundstateprep} proved lower bounds of $\Omega(1/\gamma)$ and $\Omega(1/\delta)$ on the cost for the setting with known eigenbasis and advice unitary (our row 6, and hence also row 8). This is subsumed by our stronger (and essentially optimal) $\Omega(1/\gamma\delta)$ lower bound in row~6.
As far as we are aware, ours is the first paper to systematically tie together these different results and to complete the table with tight upper and lower bounds for the cost in all 8 cases.

Let us also mention some recent work that is not directly covered by our results.
First, lower bounds for the slightly unusual small-success-probability regime were recently studied by Lin~\cite{Lin23}.
Second, there has been work to make phase estimation more efficient in the important special case where the unitary $U=e^{iH}$ is induced by  a Hamiltonian $H$ given classically as the sum of relatively simple terms, when the cost of phase estimation interacts with the cost of Hamiltonian simulation. See for instance the recent paper by Wan, Berta, and Campbell~\cite{WBC:randomizedphasest} and references therein.

\paragraph*{Application.}
She and Yuen~\cite[Theorems 1.6 and 1.7]{SY23} studied the \emph{$(t, \delta)$-Unitary recurrence time problem}, which is to distinguish whether an input unitary $U$ satisfies $U^t = I$ or $\| U^t - I \| \geq \delta$, promised that one of these is the case (see Definition~\ref{def: recurrencetime}). They proved non-matching upper and lower bounds for the cost of quantum algorithms for this problem (see Theorem~\ref{thm: SY} in this paper).
As an immediate application of our lower bound for fractional OR with advice, we also obtain improved lower bounds for the unitary recurrence time problem that match the upper bound of She and Yuen and answer one of their open problems~\cite[Section 2]{SY23}.

\begin{theorem}[Lower bound for Unitary recurrence time]\label{thm: recurrence time lower bound}
Any quantum algorithm solving the $(t,\delta)$-recurrence time problem for $N$-dimensional unitaries has cost $\Omega(t\sqrt{N}/\delta)$.
\end{theorem}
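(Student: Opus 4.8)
The plan is to reduce the adviceless case of the fractional-OR problem $\for$ to the $(t,\delta)$-recurrence time problem (Definition~\ref{def: recurrencetime}) and then invoke the $\for$ lower bound that already powers our $\maxqpe$ lower bounds (Section~\ref{sec: lower}). Given $x\in\{0,1\}^N$, set $\delta' = 2\arcsin(\delta/2)$, so that $|e^{i\delta'}-1| = \delta$ and $\delta' = \Theta(\delta)$, and define the $N$-dimensional diagonal unitary
\[
  U_x \;=\; \sum_{j=0}^{N-1} e^{\,i\, x_j \delta'/t}\,\ketbra{j}{j}.
\]
If $x=0^N$ then $U_x = I$ and hence $U_x^t = I$; if $x\neq 0^N$ then $U_x^t$ has $e^{i\delta'}$ among its eigenvalues, so $\|U_x^t - I\| = \max_j|e^{i x_j\delta'}-1| = \delta$. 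Thus every $U_x$ meets the promise of the recurrence time problem, and any algorithm for that problem, run on $U_x$, decides whether $x=0^N$, i.e.\ computes $\OR_N(x)$, which is a special case of $\for$.

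Next I would observe that a single application of $U_x^{\pm 1}$---or of a controlled version, the other primitives an algorithm may use---only realizes a \emph{fractional} query to~$x$: it sends $\ket j$ to $e^{\pm i x_j\delta'/t}\ket j$, applying a phase of magnitude at most $\delta'/t = \Theta(\delta/t)$, rather than the phase $\pi$ of a standard query $\ket j \mapsto (-1)^{x_j}\ket j$. The lower-bound argument of Section~\ref{sec: lower}, specialized to trivial (input-independent) advice, is a hybrid/adversary argument in which replacing the standard-query phase by $\beta$ replaces the factor $|(-1)-1|=2$ in the per-query perturbation bound by $|e^{i\beta}-1|=\Theta(\beta)$; combined with the usual Cauchy--Schwarz step and the fact that an $\OR_N$ algorithm must move its state by $\Omega(1)$ on each of the $N$ Hamming-weight-one inputs, this gives a lower bound of $\Omega(\sqrt N/\beta)$ on the number of phase-$\beta$ fractional queries. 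Plugging in $\beta = \Theta(\delta/t)$ yields $\Omega(t\sqrt N/\delta)$ applications of $U_x$ and $U_x^{-1}$. Since the $U_x$ form a subfamily of all $N$-dimensional unitaries, the same bound holds for the general problem.

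The reduction itself is routine; the substantive ingredient is the $\for$ hybrid argument (already established for the $\maxqpe$ lower bounds), which is why this theorem is immediate. The only points needing a little care are: (i)~taking the eigenphase to be $\delta'/t$ with $\delta'=2\arcsin(\delta/2)$, rather than $\delta/t$, so that $\|U_x^t-I\|$ equals $\delta$ exactly---in particular is $\geq\delta$---while the per-query phase stays $\Theta(\delta/t)$; and (ii)~checking that $U_x^{-1}$ and controlled-$U_x^{\pm1}$ likewise apply phases of magnitude at most $\delta'/t$, so they too count only as fractional queries of weight $\Theta(\delta/t)$---both of which are immediate.
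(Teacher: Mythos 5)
Your proof is correct and follows essentially the same route as the paper: restrict the recurrence-time instance to $\{I\}\cup\{M_{j,\beta}:j\}$ for a fractional phase $\beta=\Theta(\delta/t)$, observe that this is exactly an adviceless instance of $\for_{N,\beta,0}$, and invoke Theorem~\ref{thm: gndeltat lower bound} to get $\Omega(\sqrt{N}/\beta)=\Omega(t\sqrt{N}/\delta)$. The only (cosmetic) difference is the choice of $\beta$: the paper takes $\beta=3\delta/t$ and checks $|1-e^{3i\delta}|\geq\delta$ for $\delta\leq1$, whereas you take $\beta=2\arcsin(\delta/2)/t$ so that $\|U^t-I\|=\delta$ exactly; both are valid.
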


\noindent
Interestingly, our lower bound uses the adversary method as opposed to She and Yuen's usage of the polynomial method.

\subsection{Phase estimation with small error probability}

For our results in this subsection we revert to the original scenario of phase estimation, where an algorithm is given the actual eigenstate as input and the goal is to estimate its eigenphase~$\theta$. However, we now consider the regime where we want small error probability~$\eps$ rather than constant error probability~$1/3$. Let $\QPE_{N, \delta, \eps}$ denote the task of computing, with error probability $\leq\eps$, a $\delta$-approximation of~$\theta$.
Repeating Kitaev's $O(1/\delta)$-cost phase estimation algorithm~\cite{Kit95} $O(\log(1/\eps))$ times and taking the median of the answers, we have the following $\eps$-dependent upper bound.

\begin{theorem}[Kitaev + standard error-reduction]\label{thm: qpe upper bound}
For all integers $N \geq 2$, all $\eps \in (0,1/2)$, and all $\delta \in [0, 2\pi)$, there exists an algorithm that solves $\QPE_{N, \delta, \eps}$ with cost $O\rbra{\frac{1}{\delta}\log\frac{1}{\eps}}$.
\end{theorem}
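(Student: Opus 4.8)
The plan is to derive the theorem from Kitaev's basic phase-estimation algorithm by a standard ``run many times and take a (circular) median'' argument. First I would record the precise guarantee needed from~\cite{Kit95}: for any precision parameter $\delta' \in (0, 2\pi)$ there is an algorithm $\mathcal{K}_{\delta'}$ of cost $O(1/\delta')$ that, given black-box access to $U$ (and its controlled version) and one copy of an eigenstate $\ket{\psi}$ with $U\ket{\psi} = e^{i\theta}\ket{\psi}$, outputs some $\tilde\theta \in [0, 2\pi)$ with $|\tilde\theta - \theta| \le \delta'$ (in $\R$ mod $2\pi$) with probability at least $2/3$. The constant $2/3$ is not special here: any constant strictly above $1/2$ is achievable at the price of only a constant factor in the cost, by taking a few extra qubits in the quantum Fourier transform.

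Next I would set $\delta' := \delta/3$ and run $\mathcal{K}_{\delta'}$ independently $k := \ceil{c\ln(1/\eps)}$ times, for a suitable universal constant $c$, obtaining estimates $\tilde\theta_1, \ldots, \tilde\theta_k$. Call run $i$ \emph{good} if $|\tilde\theta_i - \theta| \le \delta/3$ in $\R$ mod $2\pi$; the runs are good independently with probability at least $2/3$ each, so by a Chernoff bound, for $c$ large enough the event $\mathcal{E}$ that strictly more than $k/2$ of the runs are good has probability at least $1 - \eps$. The final output is then produced by a purely classical post-processing step that uses no further applications of $U$ or $U^{-1}$: output any $\tilde\theta_j$ (say, with smallest such index $j$) for which $|\tilde\theta_j - \tilde\theta_i| \le 2\delta/3$ holds for strictly more than $k/2$ of the indices $i$, and output $0$ if no such $j$ exists.

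It remains to check correctness on $\mathcal{E}$. When $\mathcal{E}$ holds, the good runs form a set of size $> k/2$ and any two good estimates are within $\delta/3 + \delta/3 = 2\delta/3$ of one another, so every good index $j$ witnesses the post-processing condition and some $\tilde\theta_j$ is output. Moreover, for \emph{any} index $j$ that is output, the set $\cbra{i : |\tilde\theta_j - \tilde\theta_i| \le 2\delta/3}$ and the set of good indices each have size $> k/2$, hence intersect; choosing $i$ in the intersection and using the triangle inequality for the metric on $\R$ mod $2\pi$ gives $|\tilde\theta_j - \theta| \le |\tilde\theta_j - \tilde\theta_i| + |\tilde\theta_i - \theta| \le 2\delta/3 + \delta/3 = \delta$. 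So the algorithm errs with probability at most $\Pr[\overline{\mathcal{E}}] \le \eps$, and its cost is $k$ runs of $\mathcal{K}_{\delta/3}$, i.e.\ $k\cdot O(1/(\delta/3)) = O\rbra{\frac{1}{\delta}\log\frac{1}{\eps}}$, with free post-processing.

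The only real subtlety, rather than a genuine obstacle, is to handle the ``mod $2\pi$'' distance correctly in the median-like step; the argument above works precisely because that distance obeys the triangle inequality and because all good estimates lie in a single short arc around the \emph{fixed} point $\theta$ (so even a naive linear-order median would do, once a good majority is known to exist). Edge cases where $\delta$ is a large constant need no special treatment, since then $\mathcal{K}_{\delta/3}$ already has cost $O(1)$, and $\delta=0$ is vacuous.
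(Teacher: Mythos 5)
Your proposal is correct and takes essentially the same route the paper does: the paper gives only the one-line justification ``Repeating Kitaev's $O(1/\delta)$-cost phase estimation algorithm $O(\log(1/\eps))$ times and taking the median of the answers,'' and your argument is exactly this, fleshed out with a careful circular-median step (running at precision $\delta/3$, testing for a $>k/2$-majority within $2\delta/3$, and applying the triangle inequality in $\R$ mod $2\pi$). The extra care you take to avoid the ill-definedness of a naive median on the circle is a sound and welcome detail, but it does not change the substance of the argument.
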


Grover's algorithm~\cite{grover:search} can compute the $\OR_N$ function with error probability $\leq 1/3$ using  $O(\sqrt{N})$ queries to its $N$ input bits. 
Interestingly, there exists an $\eps$-error quantum algorithm for $\OR_N$  with only $O(\sqrt{N\log(1/\eps)})$ queries~\cite{BCWZ99}, which is asymptotically optimal. Similarly one can reduce error from $1/3$ to $\eps$ for all symmetric Boolean functions at the expense of only a factor $\sqrt{\log(1/\eps)}$ in the query complexity~\cite{wolf:degreesymmf}. This is a speed-up over the naive $O(\log(1/\eps))$ multiplicative overhead.
Since optimal quantum algorithms with error probability~$1/3$ for $\OR_N$ and for all symmetric functions  can be derived from phase estimation, one may ask if one can achieve such an efficient error-reduction for quantum phase estimation as well: is there an algorithm for $\QPE_{N, \delta, \eps}$ of cost $O\rbra{\frac{1}{\delta}\sqrt{\log(1/\eps)}}$? We answer this in the negative, showing Theorem~\ref{thm: qpe upper bound} is optimal.

\begin{theorem}\label{thm: qpe lb}
For integers $N \geq 2$ and $\eps \in (0,1/2), \delta \in (0,1)$, every algorithm that solves $\QPE_{N, \delta, \eps}$ has cost $\Omega\rbra{\frac{1}{\delta}\log\frac{1}{\eps}}$.
\end{theorem}

In particular, this means that the optimal complexity of $\OR_N$ with small error probability~$\eps$ of~\cite{BCWZ99} can\emph{not} be derived from a phase estimation routine, in contrast to the case of $\OR_N$ (and search) with constant error probability.
To show Theorem~\ref{thm: qpe lb} we first argue that a cost-$C$ algorithm for $\QPE_{N,\delta,\eps}$ gives us a cost-$C$ algorithm that distinguishes $U = I$ versus $U = I - (1 - e^{i\theta})\ketbra{0}{0}$ where $\theta \notin [-3\delta, 3\delta]$ mod $2\pi$. We then note that the acceptance probability of such an algorithm can be written as a degree-$2C$ \emph{trigonometric} polynomial in~$\theta$, and invoke a known upper bound on the growth of such trigonometric polynomials in order to lower bound their degree.

\section{Preliminaries}\label{sec: prelims}

We state the required preliminaries in this section.
All logarithms are taken base~2. For a positive integer $N$, $U(N)$ denotes the space of $N$-dimensional unitaries, and $I$ denotes the $N$-dimensional Identity matrix (we drop the subscript if the dimension is clear from context).

For a positive integer $N \geq 2$ and a value $\theta \in [0, 2\pi)$, define the $N$-dimensional unitary $U_\theta$ as $U_\theta = I - (1 - e^{i\theta})\ketbra{0}{0}$. In other words, $U_\theta$ is the diagonal matrix with all 1's except the first entry, which is $e^{i\theta}$.
For an integer $j \in \cbra{0, 1, \dots, N-1}$ and $\delta \in [0, 2\pi)$, define $M_{j, \delta} = I - (1 - e^{i\delta})\ketbra{j}{j}$.

\subsection{Model of computation}\label{sec: model}

Here we give a description of our model of computation for all tasks considered in this paper.
All problems considered in this paper have the following properties:
\begin{itemize}
    \item \textbf{Input:} An $N$-dimensional unitary $U$. We have access to the input as described below.
    \item \textbf{State space:} The state space of an algorithm comprises two registers: the first register is $N$-dimensional, and the second register is an arbitrarily large workspace.
    \item \textbf{Access to input and allowed operations:} An algorithm $\A$ may apply $U$ and $U^{-1}$ to the first register (possibly controlled by a qubit in the second register), and unitaries independent of $U$ to the whole space. It performs a POVM at the end to determine the classical output.
    \item \textbf{Cost of an algorithm:} Total number of applications of $U$ and $U^{-1}$ (or controlled versions thereof).
\end{itemize}
Depending on the specific problem under consideration, the following properties are variable.
\begin{itemize}
    \item \textbf{Initial state:} The initial state is assumed to be $\ket{0}\ket{0}$ unless mentioned otherwise.
    \item \textbf{Input promise:} The subset of the $N$-dimensional unitary group $U(N)$ from which the input is taken (possibly the full set).
    \item \textbf{Output:} The output requirement.
    \item \textbf{Advice:} We may be given access to a specific number of ``advice states'' $\ket{\alpha}$, or access to a specific number of applications of a unitary $A$ that prepares an advice state (e.g., $A\ket{0}=\ket{\alpha}$).
\end{itemize}

\subsection{Problems of interest}
We list our problems of interest here. All problems fit in the framework of the previous subsection, so we skip descriptions of the input, access to the input and allowed operations, and the workspace.

\begin{definition}[Phase Estimation]\label{def: qpe}
Let $N \geq 2$ be an integer and $\eps, \delta > 0$. The task $\QPE_{N, \delta, \eps}$ is:
\begin{itemize}
    \item \textbf{Advice:} We are given a single state $\ket{\psi}$ (in other words, our starting state is $\ket{\psi}\ket{0}$) with the promise that $U\ket{\psi} = e^{i\theta}\ket{\psi}$ for some unknown $\theta\in[0,2\pi)$.
    \item \textbf{Output:} With probability at least $1 - \eps$, output $\tilde{\theta} \in [0, 2\pi)$ such that $|\tilde{\theta} - \theta| \leq \delta~\textnormal{mod}~2\pi$.
\end{itemize}
\end{definition}

The following is a decision version of a special case of phase estimation, where $\ket{\psi}=\ket{0}$:

\begin{definition}
Let $N \geq 2$ be an integer, $\eps, \delta \in (0, 1)$.
The task $\dist_{N, \delta, \eps}$ is:
\begin{itemize}
    \item \textbf{Input promise:} $U \in \cbra{I, \cbra{U_\theta: \theta \notin [-\delta, \delta] \mod 2\pi}}$.
    \item \textbf{Output:} With probability at least $1 - \eps$, output 1 if $U = I$ and output 0 if $U\neq I$.
\end{itemize}
\end{definition}

We next define the natural variant of phase estimation that we consider when an algorithm need not be given a state from the target eigenspace but only a state $\ket{\alpha}$ that has non-negligible overlap with that eigenspace.

\begin{definition}[Maximum phase estimation]\label{def: maxphaseestimation}
Let $N \geq 2$ be an integer and $\delta > 0$. The task $\maxqpe_{N, \delta}$ is:
\begin{itemize}
    \item \textbf{Input promise:} We consider two cases: one where the eigenbasis of $U$ is known, and the other where it is unknown. In the former case, we may assume $U = \sum_{j = 0}^{N - 1}e^{i\theta_j}\ketbra{j}{j}$ for $\theta_j\in[0,2\pi)$. Define $\theta_{\max}=\max_{j \in \cbra{0, 1, \dots, N-1}}\theta_j$. We are promised that $\theta_{\max}\in[0,2\pi-2\delta)$.
    \item \textbf{Advice:} We consider two cases: 
    \begin{itemize}
        \item In one case we are given access to advice in the form of a state $\ket{\alpha}$ such that $\|P_S\ket{\alpha}\|^2 \geq \gamma^2$, where $P_S$ denotes the projection on $S$, the space of all eigenstates with eigenphase~$\theta_{\max}$. If $S$ is spanned by one $\ket{u_{\max}}$, this requirement is the same as $|\braket{\alpha}{u_{\max}}| \geq \gamma$. 
        \item In the other case, we have black-box access to a unitary~$A$ that prepares such a state~$\ket{\alpha}$. We can apply $A$ and $A^{-1}$. As before, $\gamma$ is the \emph{overlap} of $\ket{\alpha}$ with the target eigenspace.
    \end{itemize}
    \item \textbf{Number of accesses to advice:} We either have `few' accesses to advice as defined above ($o(1/\gamma^2)$ advice states or $o(1/\gamma)$ advice unitaries), or `many' accesses to advice ($\Omega(1/\gamma^2)$ advice states or $\Omega(1/\gamma)$ advice unitaries).
    \item \textbf{Output:} With probability at least $2/3$, output a value in $[\theta_{\max}-\delta,\theta_{\max}+\delta]~\textnormal{mod}~2\pi$.
\end{itemize}
\end{definition}

\begin{definition}[Unitary recurrence time, {\cite[Definition 1.5]{SY23}}]\label{def: recurrencetime}
    For integers $N \geq 2, t \geq 1$ and $\delta \in (0,1)$, the $(t,\delta)$-recurrence time problem is:
    \begin{itemize}
        \item \textbf{Input promise:} Either $U = I$, or $\|U^t - I\| \geq \delta$ in spectral norm.
        \item \textbf{Output:} With probability at least $2/3$: output 1 if $U = I$, and 0 otherwise.
    \end{itemize}
\end{definition}

\noindent
The following are the non-matching upper and lower bounds for this problem of She and Yuen~\cite{SY23} (which we improve upon in our  Theorem~\ref{thm: recurrence time lower bound}).

\begin{theorem}[{\cite[Theorems 1.6 and 1.7]{SY23}}]\label{thm: SY}
    Let $\delta \leq \frac{1}{2\pi}$. Every quantum algorithm solving the $(t,\delta)$-recurrence time problem for $d$-dimensional unitaries has cost $\Omega\rbra{\max\rbra{t/\delta, \sqrt{d}}}$. The $(t,\delta)$-recurrence time problem can be solved with cost $O(t\sqrt{d}/\delta)$.
\end{theorem}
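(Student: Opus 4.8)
The plan is to prove the upper and lower bounds separately: the upper bound by reducing to phase estimation on $V:=U^t$ (implementable with $t$ applications of $U$) followed by amplitude amplification, and the lower bound by the hybrid/adversary method on a search-type family of diagonal instances.

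For the upper bound $O(t\sqrt d/\delta)$, I would first note that, since $\delta\le 1/(2\pi)$ and writing eigenphases of $V$ in $(-\pi,\pi]$, the condition $\norm{V-I}\ge\delta$ holds iff $V$ has an eigenphase $\phi$ with $\abs\phi\ge 2\arcsin(\delta/2)\ge\delta$, whereas $U^t=I$ forces every eigenphase of $V$ to be $0$. So it suffices to decide whether all eigenphases of $V$ are $0$ or some has absolute value $\ge\delta$. I would run quantum phase estimation of $V$ to additive precision $\delta/3$ with small constant error, applied to a maximally mixed seed on the first register (realized coherently by purifying $I/d$ with a $d$-dimensional ancilla), and declare a ``flag'' whenever the estimate $\tilde\phi$ satisfies $\abs{\tilde\phi}>\delta/3$ as a distance in $\R\bmod 2\pi$. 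If $U^t=I$ the flag never fires, since phase estimation returns $0$ with certainty on any eigenstate whose eigenphase is $0$; if $\norm{V-I}\ge\delta$ the flag fires with probability at least $(1/d)\cdot 0.9$, the factor $1/d$ lower-bounding the weight of the maximally mixed seed on the (nonempty) far eigenspace. This flagging procedure is a unitary $\mathcal B$ of cost $O(1/\delta)$ applications of $V$, i.e.\ $O(t/\delta)$ applications of $U$. Finally I would run amplitude amplification on $\mathcal B$ with the usual doubling schedule of up to $O(\sqrt d)$ Grover iterations, which amplifies a flag-probability $\ge 0.9/d$ to a constant while leaving a flag-probability of $0$ unchanged, so observing the flag certifies $U^t\ne I$. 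The total cost is $O(\sqrt d)\cdot O(t/\delta)=O(t\sqrt d/\delta)$. (This is essentially the adviceless algorithm of \cite{poulin2009GibbsSamplingAndEval,AGGW20} behind Lemma~\ref{lemma: odd ub}, specialized to $V$ with the fixed threshold $\delta$, which is why no $\log$ factor appears.)

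For the lower bound I would use the hybrid method, exploiting that a diagonal unitary close to $I$ makes each application of $U$ a ``weak''/fractional query. Set $\theta:=\tfrac{2}{t}\arcsin(\delta/2)=\Theta(\delta/t)$, so that $M_{j,\theta}:=I-(1-e^{i\theta})\ketbra j j$ satisfies $\norm{M_{j,\theta}^t-I}=\abs{1-e^{it\theta}}=\delta$ exactly; thus each $M_{j,\theta}$ is a legal ``no'' instance and $U=I$ is the ``yes'' instance. For the $\Omega(\sqrt d)$ bound, take the ``no'' instances to be $M_{j,\theta}$ for $j\in\cbra{1,\dots,d-1}$ and run the BBBV hybrid argument with per-query magnitude $\abs{1-e^{i\theta}}$ in place of $2$: if $\ket{\psi_T}$ and $\ket{\psi_T^{(j)}}$ are the final states of a $T$-query algorithm on inputs $I$ and $M_{j,\theta}$, then $\sum_{j}\norm{\ket{\psi_T}-\ket{\psi_T^{(j)}}}\le \abs{1-e^{i\theta}}\cdot T\sqrt d$ (using $\sum_{j}\norm{\ketbra j j\ket{w}}\le\sqrt d$ on the input-independent intermediate states $\ket w$ of the all-identity run), and since correctness forces $\norm{\ket{\psi_T}-\ket{\psi_T^{(j)}}}=\Omega(1)$ for every $j$ we get $T=\Omega(\sqrt d/\abs{1-e^{i\theta}})$; in particular $T=\Omega(\sqrt d)$. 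For the $\Omega(t/\delta)$ bound, take $d=2$ with the single ``no'' instance $U_\theta=\mathrm{diag}(e^{i\theta},1)$: a $T$-query algorithm's final states on $I$ and $U_\theta$ differ in norm by at most $T\norm{U_\theta-I}=T\abs{1-e^{i\theta}}=\Theta(T\delta/t)$, so distinguishing requires $T=\Omega(t/\delta)$. Combining, $T=\Omega(\max(t/\delta,\sqrt d))$. (Keeping the exact value $\abs{1-e^{i\theta}}=\Theta(\delta/t)$ in the first bound in fact yields $\Omega(t\sqrt d/\delta)$ — the sharpening recorded in Theorem~\ref{thm: recurrence time lower bound}.)

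The step I expect to be the main obstacle is making the upper bound fully rigorous: composing phase estimation on a \emph{mixed} seed with amplitude amplification while keeping $\mathcal B$ an honest unitary whose flag event has probability exactly $0$ on every ``yes'' instance (so that ``certify on the first flag'' is sound), and running amplitude amplification with an unknown but lower-bounded success probability via the doubling schedule. The lower bound is comparatively routine; the only care needed is to choose $\theta$ small enough to make the bounds as strong as possible yet large enough that the ``no'' instances remain legal ($\norm{U^t-I}\ge\delta$), and to note that the hybrid argument is unaffected when the algorithm may also query $U^{-1}=M_{j,-\theta}$, since $\norm{I-M_{j,-\theta}}=\abs{1-e^{i\theta}}$.
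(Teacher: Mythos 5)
This statement is imported from She and Yuen \cite{SY23} (their Theorems~1.6 and~1.7); the paper itself gives no proof of it, so there is no in-paper argument to compare against line by line. Judged on its own merits, your proof is correct. Your upper bound is essentially the paper's adviceless algorithm (Lemma~\ref{lemma: odd ub}, i.e.\ \cite{poulin2009GibbsSamplingAndEval,AGGW20}) applied to $V=U^t$ with a fixed acceptance threshold instead of maximum-finding, and the two points you flag as delicate are indeed the right ones but both go through: standard QPE with an exact inverse QFT outputs the estimate $0$ with certainty on an eigenphase-$0$ eigenstate, so the flag amplitude is exactly zero on every yes-instance, and amplitude amplification with the doubling schedule capped at $O(\sqrt d)$ iterations handles the unknown-but-lower-bounded flag probability. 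Your lower bound takes a genuinely different route from both existing proofs: She and Yuen used the polynomial method, and this paper proves the stronger $\Omega(t\sqrt N/\delta)$ bound (Theorem~\ref{thm: recurrence time lower bound}) by reducing to $\for_{N,3\delta/t,0}$ and running an Ambainis-style adversary argument on the progress measure $\sum_j|\braket{\psi_0^T}{\psi_j^T}|$ (Appendix~\ref{app: adversary proof}); you instead run a BBBV hybrid argument on the same hard family $\{I\}\cup\{M_{j,\theta}\}$ with $\theta=\Theta(\delta/t)$. The two progress arguments are close cousins and, as you note, keeping the factor $|1-e^{i\theta}|=\Theta(\delta/t)$ in your hybrid bound already yields the sharpened $\Omega(t\sqrt d/\delta)$ of Theorem~\ref{thm: recurrence time lower bound}, so your argument subsumes the adviceless case of the paper's adversary bound; what it does not give is the advice-dependent version (Theorem~\ref{thm: gndeltat lower bound} with $t>0$ copies of the advice state), where the adversary formulation is the more natural one because the input-dependent initial states enter through the initial value of the progress measure. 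Two cosmetic points: the model also grants controlled-$U$, which your hybrid bound absorbs without change since $\|(\cU_{\theta})-(c\textnormal{-}I)\|=|1-e^{i\theta}|$ on the relevant component, and for the $\Omega(t/\delta)$ term you should embed the two-dimensional instance into $d$ dimensions, which is immediate.
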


\subsection{Trigonometric polynomials and their growth}

\begin{definition}[Trigonometric Polynomials]
A function $p : \R \to \C$ is said to be a \emph{trigonometric polynomial} of degree $d$ if there exist complex numbers $\cbra{a_k : k \in \cbra{-d, \dots, d}}$ such that for all $\theta \in \R$,
\[
p(\theta) = \sum_{k = -d}^da_ke^{ik\theta}.
\]
\end{definition}

We will use the following property of low-degree trigonometric polynomials.

\begin{theorem}[{\cite[Theorem~5.1.2]{BE95}}]\label{thm: trig magic}
Let $t$ be a degree-$n$ real-valued trigonometric polynomial and $s \in (0, \pi/2]$ be such that $\mu(\cbra{\theta \in [-\pi, \pi) : |t(\theta)| \leq 1}) \geq 2\pi - s$, where $\mu$ denotes the Lebesgue measure on $\mathbb{R}$. Then,
$\sup_{x \in \mathbb{R}}|t(x)| \leq \exp(4ns)$.
\end{theorem}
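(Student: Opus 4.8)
I would reconstruct the proof of this Remez-type inequality for trigonometric polynomials by splitting on the size of $ns$. Regard $t$ as a $2\pi$-periodic function, let $B=\cbra{\theta:|t(\theta)|>1}$ (so $\mu(B)\le s$), and put $M=\norm{t}_\infty$; we may assume $M>1$, as otherwise the bound is trivial. For \emph{small} $ns$ I would use only Bernstein's inequality $\norm{t'}_\infty\le nM$: replacing $t$ by $-t$ so that $t(\theta_0)=M$ at some point $\theta_0$, and walking away from $\theta_0$ in both directions until $|t|$ first returns to $1$, the mean value theorem shows each excursion has length at least $(M-1)/(nM)$ and lies inside $B$; hence $s\ge 2(M-1)/(nM)$, which for $ns<2$ rearranges to $M\le 1/(1-ns/2)$. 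Using $-\ln(1-x)\le x/(1-x)$ one checks $1/(1-ns/2)\le e^{4ns}$ for all $ns\le 1$, which disposes of that regime.

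For $ns\ge 1$ the plan is to reduce to the classical \emph{algebraic} Remez inequality. Since $t^2-1$ is a nonzero trigonometric polynomial (the constant case is trivial) it has finitely many zeros, so $B$ is a finite union of open arcs of total length $\le s$. The crucial structural step is a rearrangement: among degree-$n$ real trigonometric polynomials whose exceptional set has measure $\le s$, the supremum of $\norm{\cdot}_\infty$ is attained when the exceptional set is a single arc, so it suffices to bound $\norm{t}_\infty$ in that case. \textbf{This rearrangement is the step I expect to be the main obstacle}; it rests on the fact that degree-$n$ trigonometric polynomials form a Chebyshev system on the circle, so an extremal $t$ must attain $+1$ and $-1$ alternately at least $2n$ times off its exceptional set, which forces that set to be a single arc (this is carried out in \cite[Ch.~5]{BE95}, and the underlying ``merging intervals''/equioscillation argument is the delicate part). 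Granting it, rotate so that $B=(-\ell/2,\ell/2)$ with $\ell\le s$, i.e.\ $|t(\theta)|\le 1$ whenever $|\theta|\in[\ell/2,\pi]$.

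The final step is the substitution $x=\cos\theta$. Write $t(\theta)=A(\cos\theta)+\sin\theta\,B_1(\cos\theta)$ with $\deg A\le n$, $\deg B_1\le n-1$ (using $\sin k\theta=\sin\theta\,U_{k-1}(\cos\theta)$). From $A(\cos\theta)=\tfrac12(t(\theta)+t(-\theta))$ and the fact that $\theta$ and $-\theta$ lie together in the good set when $|\theta|\ge\ell/2$, we get $|A(x)|\le 1$ on $[-1,\cos(\ell/2)]$, and likewise $g(x):=\sqrt{1-x^2}\,B_1(x)$ satisfies $0\le g(x)^2\le 1$ there, where $g^2=(1-x^2)B_1^2$ is an honest polynomial of degree $\le 2n$. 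The forbidden interval $(\cos(\ell/2),1]$ has length $\sigma=1-\cos(\ell/2)\le s^2/8$. Applying the algebraic Remez inequality on $[-1,1]$ (whose extremizer is a scaled Chebyshev polynomial) yields $\norm{A}_{[-1,1]}\le T_n\rbra{\frac{2+\sigma}{2-\sigma}}$ and $\norm{g^2}_{[-1,1]}\le T_{2n}\rbra{\frac{2+\sigma}{2-\sigma}}$; combining with the elementary bounds $\frac{2+\sigma}{2-\sigma}\le 1+2\sigma$, $T_N(1+y)\le e^{N\sqrt{2y}}$ (which follows from $T_N(\cosh u)=\cosh(Nu)$ and $\cosh\sqrt{2y}\ge 1+y$), and $\sqrt\sigma\le s/(2\sqrt2)$, this gives $\norm{A}_{[-1,1]}\le e^{ns/\sqrt2}$ and $\norm{g}_{[-1,1]}=\norm{g^2}_{[-1,1]}^{1/2}\le e^{ns/\sqrt2}$. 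Since $\norm{t}_\infty\le\norm{A}_{[-1,1]}+\norm{g}_{[-1,1]}$ (splitting $t$ into its even and odd parts in $\theta$, which are exactly $A(\cos\theta)$ and $\sin\theta\,B_1(\cos\theta)$), we conclude $M\le 2e^{ns/\sqrt2}\le e^{4ns}$ for $ns\ge 1$. Together with the small-$ns$ case this proves the theorem; the only remaining checks (the constant case, $s$ near $\pi/2$, and the precise Chebyshev constant) are routine.
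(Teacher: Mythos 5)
The paper does not prove this statement: it is imported wholesale from \cite[Theorem~5.1.2]{BE95} and used as a black box, so there is no internal proof to compare against; I am assessing your reconstruction on its own terms. Two of your three pieces are sound. The small-$ns$ regime via Bernstein's inequality is correct (the excursion around the maximum gives $s\geq 2(M-1)/(nM)$, hence $M\leq e^{4ns}$ for $ns\leq 1$), and your single-arc computation is also correct: the $x=\cos\theta$ substitution, the even/odd split into $A(\cos\theta)$ and $\sin\theta\,B_1(\cos\theta)$, the measure bound $\sigma=1-\cos(\ell/2)\leq s^2/8$, and the Chebyshev estimates do combine to give $\norm{t}_\infty\leq 2e^{ns/\sqrt{2}}\leq e^{4ns}$ for $ns\geq 1$, \emph{provided} the exceptional set is a single arc.

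The genuine gap is exactly the step you flagged, and it is much more serious than ``delicate but carried out in \cite[Ch.~5]{BE95}.'' The assertion that the extremal configuration for the trigonometric Remez problem has $\cbra{\theta:|t(\theta)|>1}$ equal to a single arc is not a routine Chebyshev-system rearrangement: it is essentially the content of the \emph{sharp} Remez inequality for trigonometric polynomials, which remained open long after 1995 and was only settled by Tikhonov and Yuditskii (Constructive Approximation, 2020). On the circle the interval-based alternation and uniqueness theory for Chebyshev systems does not transfer directly, and even if an extremal $t$ attains $\pm 1$ alternately at $2n$ points of the good set, nothing in that count forces the (up to $2n$ components of the) exceptional set to merge into one arc; this is precisely why the non-sharp constant $4$ appears in \cite{BE95}, whose argument does not go through this reduction. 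Nor can the gap be patched by the obvious fallback of symmetrizing about the maximum point: $t(\theta)t(-\theta)=Q(\cos\theta)$ with $\deg Q\leq 2n$ and a bad set of $\theta$-measure at most $s$, but since $dx=-\sin\theta\,d\theta$, a bad arc sitting near $\theta=\pi/2$ keeps $x$-measure of order $s$ rather than $s^2$, and the algebraic Remez inequality then yields only $\norm{t}_\infty\leq e^{O(n\sqrt{s})}$, which is weaker than $e^{4ns}$ precisely in the small-$s$, large-$n$ regime your interval argument is supposed to cover. As written, the large-$ns$ half of the proof therefore does not close; you would need either the actual argument of \cite{BE95} or the Tikhonov--Yuditskii theorem to justify the single-arc reduction.
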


\section{Lower bounds for maximum phase estimation and Unitary recurrence time}\label{sec: lower}

In this section we show lower bounds on the quantum complexity of maximum phase estimation obtained by varying all its parameters (see Section~\ref{sec: model} and Definition~\ref{def: maxphaseestimation}). In this section and the next, we refer to the row numbers of Table~\ref{table: bounds} when stating and proving our bounds.

Recall that for an integer $j \in \cbra{0, 1, \dots, N-1}$ and $\delta \in [0, 2\pi)$ we define $M_{j, \delta} = I - (1 - e^{i\delta})\ketbra{j}{j}$.
Our lower bounds will be via reduction from the following ``Fractional OR with advice'' problem, which fits in the framework of the model described in Section~\ref{sec: model}.

\begin{definition}[Fractional OR with advice]\label{def: gndeltat}
    Let $N \geq 2$ be   integer, $\delta > 0$. The task $\for_{N, \delta, t}$ is:
    \begin{itemize}
        \item \textbf{Input promise:} $U \in \cbra{I, \cbra{M_{j, \delta} : j \in \cbra{1, 2, \dots, N-1}}}$.
        \item \textbf{Advice:} When $U = I$ we are given $t$ copies of $\ket{0}$ as advice. When $U = M_{j, \delta}$, we are given $t$ copies of the state $\gamma\ket{j} + \sqrt{1 - \gamma^2}\ket{0}$, i.e., part of our starting state is $(\gamma\ket{j} + \sqrt{1 - \gamma^2}\ket{0})^{\otimes t}$.
        \item \textbf{Output:} With probability at least $2/3$, output 1 if $U = I$ and output 0 if $U \neq I$.
    \end{itemize}
\end{definition}

We first show a lower bound on the cost of computing $\for_{N,\delta,t}$ when $t = o(1/\gamma^2)$.
All of our lower bounds in Table~\ref{table: bounds} as well as our lower bound for the Unitary recurrence time problem will use this lower bound. The proof (given in Appendix~\ref{app: adversary proof}) follows along the same lines as Ambainis' adversary lower bound~\cite[Theorem~4.1]{Amb02} of $\Omega(\sqrt{N})$ queries for the $N$-bit Search problem, but now we additionally take into account the initial advice states and the fact that our input unitaries are only \emph{fractional} versions of phase queries.

\begin{theorem}\label{thm: gndeltat lower bound}
     For an integer $N\geq 2$, real numbers $\gamma \geq 1/\sqrt{N}$, $\delta \in (0,1)$ and $t = o(1/\gamma^2)$, every  algorithm solving $\for_{N, \delta, t}$ has cost $\Omega(\sqrt{N}/\delta)$.
\end{theorem}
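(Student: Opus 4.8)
The plan is to adapt Ambainis' adversary argument for $N$-bit Search, accounting for two new ingredients: the input-dependent advice in the starting state, and the fact that the oracle is a fractional phase query $M_{j,\delta}$ rather than a bit-flip. Fix a small constant $\beta>0$. Let $R=\{(I, M_{j,\delta}) : j\in\{1,\dots,N-1\}\}$ be the adversary relation pairing the all-ones instance $U=I$ with each ``planted'' instance $U=M_{j,\delta}$. For an algorithm $\A$ of cost $C$, consider the sequence of states after each of the $C$ oracle calls, on input $(U, \text{advice})$; write $\ket{\phi^t_I}$ for the run on $U=I$ and $\ket{\phi^t_j}$ for the run on $U=M_{j,\delta}$, where the starting states are $\ket{0}^{\otimes t}\ket{0}$ and $(\gamma\ket{j}+\sqrt{1-\gamma^2}\ket{0})^{\otimes t}\ket{0}$ respectively. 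Define the progress measure $W_k = \sum_{j=1}^{N-1}\abs{\braket{\phi^k_I}{\phi^k_j}}$. The three standard steps are: (i) lower-bound $W_0$ using the overlap of the advice states; (ii) upper-bound $W_C$ using the fact that $\A$ distinguishes $U=I$ from $U=M_{j,\delta}$ with probability $2/3$; (iii) upper-bound the per-step decrease $\abs{W_{k}-W_{k+1}}$ caused by one application of $U$ or $U^{-1}$. Combining, $C\gtrsim (W_0-W_C)/\max_k\abs{W_k-W_{k+1}}$.

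For step (i): the inner product of the two starting states is $\bigl(\sqrt{1-\gamma^2}\bigr)^{t}=(1-\gamma^2)^{t/2}\geq 1-O(t\gamma^2)$, so since $t=o(1/\gamma^2)$ this is $1-o(1)$, giving $W_0\geq (1-o(1))(N-1)=\Omega(N)$. This is precisely where the hypothesis $t=o(1/\gamma^2)$ enters, and it is the conceptual heart of the ``little advice is useless'' phenomenon. For step (ii): if $\A$ outputs the correct bit with probability $\geq 2/3$ on both $U=I$ and $U=M_{j,\delta}$, then the final states $\ket{\phi^C_I}$ and $\ket{\phi^C_j}$ have trace distance bounded below by a constant, hence $\abs{\braket{\phi^C_I}{\phi^C_j}}\leq c<1$ for a universal constant $c$; summing, $W_C\leq c(N-1)$. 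So $W_0-W_C=\Omega(N)$.

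The main obstacle is step (iii): bounding how much a single fractional query changes $W_k$. Here $U=I$ acts as the identity so only the run on $U=M_{j,\delta}$ evolves; the change in $\braket{\phi^k_I}{\phi^k_j}$ is $\bra{\phi^k_I}(M_{j,\delta}-I)\ket{\phi^k_j}$ (suitably tensored with identity on the workspace and on the other registers), and $M_{j,\delta}-I=(e^{i\delta}-1)\ketbra{j}{j}$ has operator norm $\abs{e^{i\delta}-1}=2\abs{\sin(\delta/2)}\leq\delta$. Thus $\abs{W_{k+1}-W_k}\leq \delta\sum_{j=1}^{N-1}\abs{\braket{\phi^k_I}{j}}$ (where $\ket{j}$ denotes the projector onto the $j$-th basis state of the first register, acting on $\ket{\phi^k_I}$), and by Cauchy–Schwarz $\sum_{j}\abs{\braket{\phi^k_I}{j}}\leq\sqrt{N}\cdot\norm{\ket{\phi^k_I}}=\sqrt{N}$. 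Hence each query changes $W_k$ by at most $\delta\sqrt{N}$. This is the step requiring care: when the advice unitary/state lives on the same register being queried one must be careful that the workspace and the ``copies'' are correctly accounted for, but since the advice is simply baked into the fixed starting state and is not re-accessed, the argument is unaffected --- the per-step bound only sees the current state of the first queried register. Putting the pieces together, $C = \Omega(N)/(\delta\sqrt{N}) = \Omega(\sqrt{N}/\delta)$, as claimed. (The restriction $\delta\in[0,\pi]$ is used only so that $\abs{e^{i\delta}-1}\leq\delta$ is the convenient bound; for $\delta>\pi$ one replaces $\delta$ by $2\sin(\delta/2)$, which changes nothing asymptotically.)
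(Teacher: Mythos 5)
Your proposal is correct and follows essentially the same approach as the paper's proof in Appendix~A: same progress measure $W_k=\sum_{j}|\braket{\phi^k_I}{\phi^k_j}|$, same lower bound $W_0=(N-1)(1-\gamma^2)^{t/2}$ using the inner product of the advice states, same upper bound $W_C\leq c(N-1)$ from distinguishability of the final states, and the same per-step bound $|1-e^{i\delta}|\sqrt{N-1}$ via Cauchy--Schwarz. The only slightly elided step is in bounding the per-step change: the operator-norm bound $\|M_{j,\delta}-I\|\leq\delta$ alone would only give a per-$j$ change of $\delta$ (hence $\delta N$ total); what you need, and what the paper's decomposition $\ket{\psi_0^T}=\sum_k\alpha_{0k}\ket{k}\ket{w_{0k}}$ makes explicit, is that the perturbation is supported on $\ketbra{j}{j}$, so the change for a fixed $j$ is $\leq|1-e^{i\delta}|\cdot\|P_j\ket{\phi^k_I}\|\cdot\|P_j\ket{\phi^k_j}\|\leq|1-e^{i\delta}|\cdot\|P_j\ket{\phi^k_I}\|$, and the orthogonality of the projectors $P_j$ is what makes the Cauchy--Schwarz sum over $j$ give $\sqrt{N}$ rather than $N$. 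You clearly intend this (your parenthetical ``where $\ket{j}$ denotes the projector onto the $j$-th basis state'' shows so), but it is worth stating the intermediate Cauchy--Schwarz explicitly.
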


In the following four lower-bound lemmas we assume $\delta\in (0,1)$ as well.

\begin{lemma}[Lower bound for Rows 1,3]\label{lemma:13lb}
Row 1 (and hence Row 3) has a lower bound of $\Omega(\sqrt{N}/\delta)$.
\end{lemma}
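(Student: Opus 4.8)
The plan is to derive Lemma~\ref{lemma:13lb} as a direct consequence of Theorem~\ref{thm: gndeltat lower bound} by a reduction showing that an algorithm for $\maxqpe_{N,\delta}$ in the setting of Row~1 (known eigenbasis, $o(1/\gamma^2)$ copies of the advice state) can be used to solve $\for_{N',\delta',t}$ for suitable parameters. First I would set up the reduction: given an instance of $\for_{N, \delta', t}$ with input $U \in \{I\} \cup \{M_{j,\delta'} : j \in \{1,\dots,N-1\}\}$ and $t = o(1/\gamma^2)$ copies of the corresponding advice state, I observe that each such $U$ is diagonal in the computational basis (the known eigenbasis), with top eigenphase $\theta_{\max} = 0$ when $U = I$ and $\theta_{\max} = \delta'$ when $U = M_{j,\delta'}$. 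Moreover, the advice state $\gamma\ket{j} + \sqrt{1-\gamma^2}\ket{0}$ has squared overlap $\gamma^2$ with the top eigenspace $S = \mathrm{span}\{\ket{j}\}$ in the latter case, and the advice $\ket{0}$ is a valid (in fact degenerate) advice state when $U = I$. So this is exactly an instance of $\maxqpe$ with known eigenbasis, overlap $\gamma$, and $t = o(1/\gamma^2)$ advice states.

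Next I would choose $\delta' = 3\delta$ (or any constant multiple of $\delta$ large enough that a $\delta$-precise estimate of $\theta_{\max}$ distinguishes $\theta_{\max} = 0$ from $\theta_{\max} = \delta'$; we need $\delta' > 2\delta$ mod $2\pi$, and we should keep $\delta' \le \pi$ so Theorem~\ref{thm: gndeltat lower bound} applies). Then a cost-$C$ algorithm for $\maxqpe_{N,\delta}$ run on this instance outputs, with probability $\geq 2/3$, a value within $\delta$ of $\theta_{\max}$; thresholding this output at, say, $\delta' / 2$ tells us whether $\theta_{\max} = 0$ or $\theta_{\max} = \delta'$, hence whether $U = I$ or not. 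This solves $\for_{N, \delta', t}$ with cost $C$, so by Theorem~\ref{thm: gndeltat lower bound} we get $C = \Omega(\sqrt{N}/\delta') = \Omega(\sqrt{N}/\delta)$, which is the claimed lower bound for Row~1. The lower bound for Row~3 then follows immediately since the unknown-eigenbasis setting is at least as hard as (subsumes) the known-eigenbasis setting: any algorithm for Row~3 also works for Row~1, so the Row~1 lower bound transfers.

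The main obstacle, such as it is, lies in correctly matching up the precision parameters and the "mod $2\pi$" wraparound: one must ensure that the interval $[\theta_{\max} - \delta, \theta_{\max} + \delta]$ for the two cases $\theta_{\max} \in \{0, \delta'\}$ are disjoint as subsets of $\R/2\pi\Z$, which forces a constraint like $2\delta < \delta' < 2\pi - 2\delta$; choosing $\delta' = 3\delta$ and assuming $\delta$ is at most a small constant (which we may, since for larger $\delta$ the bound $\Omega(\sqrt N/\delta)$ is weakest and easy) handles this. A second minor point is that $\for$ as defined uses a fixed overlap $\gamma$ whereas $\maxqpe$ only requires overlap \emph{at least} $\gamma$; this is fine since exact overlap $\gamma$ is a special case. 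Everything else is bookkeeping, and the proof is short.
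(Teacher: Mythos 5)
Your proof is correct and follows essentially the same route as the paper: reduce $\for_{N,3\delta,t}$ to $\maxqpe_{N,\delta}$ in the known-basis setting by observing that the $\for$ inputs and advice states are valid $\maxqpe$ instances with top eigenphase $0$ or $3\delta$, threshold the output, and invoke Theorem~\ref{thm: gndeltat lower bound}. The remarks about needing the intervals disjoint mod $2\pi$ (the paper uses $\delta < 2\pi/5$) and about Row~3 subsuming Row~1 also match the paper.
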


\begin{proof}
    A cost-$C$ algorithm $\mathcal{A}$ for $\maxqpe_{N, \delta}$ with $t$ advice states and known eigenbasis of~$U$ immediately yields a cost-$C$ algorithm $\A'$ for $\for_{N, 3\delta, t}$: run $\mathcal{A}$ on the input unitary, output 1 if the output phase is in $[- \delta, \delta]$ modulo $2\pi$, and output 0 otherwise.
    When $U = I$, the correctness of $\A$ guarantees that with probability at least $2/3$, the value output by $\A$ is in $[-\delta, \delta]$ mod $2\pi$. When $U = M_{j, 3\delta}$, the correctness of $\A$ guarantees that with probability at least $2/3$, the value output by $\A$ is in $[2\delta, 4\delta]$. For $\delta \in (0,1)$, we have $[-\delta, \delta]\mod 2\pi \cap [2\delta, 4\delta]$ mod $2\pi = \emptyset$. Thus, $\A'$ solves $\for_{N, 3\delta, t}$ and has cost $C$.
    Theorem~\ref{thm: gndeltat lower bound} yields the bound  $C=\Omega\rbra{\sqrt{N}/\delta}$ when $t = o(1/\gamma^2)$, giving the desired result.
\end{proof}

\begin{lemma}[Lower bound for Rows 2,4]\label{lemma:24lb}
Row 2 (and hence Row 4) has a lower bound of $\Omega\rbra{1/\gamma\delta}$.
\end{lemma}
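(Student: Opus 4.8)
The plan is to prove the $\Omega(1/\gamma\delta)$ lower bound for Row~2 by a reduction from $\for_{N',3\delta,t}$ on a \emph{smaller} universe of size $N' \approx 1/\gamma^2$, combined with Theorem~\ref{thm: gndeltat lower bound}. The key observation is that Theorem~\ref{thm: gndeltat lower bound} gives, for $t = o(1/\gamma^2)$, a cost lower bound of $\Omega(\sqrt{N'}/\delta)$ for solving $\for_{N',\delta',t}$ whenever the overlap parameter is $\geq 1/\sqrt{N'}$; taking $N' = \Theta(1/\gamma^2)$ makes $1/\sqrt{N'} = \Theta(\gamma)$, so the advice states of $\for_{N',\cdot,t}$ are exactly of the form $\gamma\ket{j} + \sqrt{1-\gamma^2}\ket{0}$ (up to adjusting constants), and the resulting lower bound is $\Omega(\sqrt{N'}/\delta) = \Omega(1/\gamma\delta)$.

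First I would spell out the reduction exactly as in Lemma~\ref{lemma:13lb}: given a cost-$C$ algorithm $\A$ for $\maxqpe_{N,\delta}$ in the Row~2 setting (known eigenbasis, many advice states), I embed an instance of $\for_{N',3\delta,t}$ with $N' = \lceil 1/\gamma^2\rceil$ into the first $N'$ coordinates of the $N$-dimensional space (the promise $\gamma \geq 1/\sqrt{N}$ ensures $N' \leq N$). The advice state $\gamma\ket{j} + \sqrt{1-\gamma^2}\ket{0}$ of the $\for$ instance has overlap exactly $\gamma \geq 1/\sqrt{N'}$ with the top eigenspace of $M_{j,3\delta}$ (which is $\mathrm{span}\{\ket{j}\}$ when $3\delta$ is the max eigenphase, i.e.\ for $3\delta \in [0,2\pi)$), and overlap with $\ket{0}$'s eigenspace when $U=I$. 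So this is a legitimate $\maxqpe$ advice state. Then I run $\A$, output~$1$ if its phase estimate lies in $[-\delta,\delta] \bmod 2\pi$ and $0$ otherwise; the same disjointness argument (needing $\delta < 2\pi/5$, which the hypothesis $\delta \in (0,1/2)$ of the relevant theorems gives, or is assumed in the $\maxqpe$ setup) shows $\A'$ solves $\for_{N',3\delta,t}$ at cost $C$.

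Next, I apply Theorem~\ref{thm: gndeltat lower bound} with universe size $N'$: as long as the number of advice copies $t$ satisfies $t = o(1/\gamma^2) = o(N')$ — which is exactly the regime in which Row~2's lower bound is claimed, since ``many'' advice for Row~2 means $\Omega(1/\gamma^2)$ but the lower bound only needs to rule out cheap algorithms with $t$ up to $c/\gamma^2$ as noted in the footnote — we get $C = \Omega(\sqrt{N'}/\delta) = \Omega(\sqrt{1/\gamma^2}/\delta) = \Omega(1/\gamma\delta)$. Since Row~2 (known basis) is a special case of Row~4 (unknown basis), the same bound holds for Row~4. One technical point to handle carefully: I should confirm $3\delta \in [0,2\pi)$ so that $M_{j,3\delta}$ genuinely has $\ket{j}$ in its top eigenspace; this is fine since $\delta < 2\pi/5$ forces $3\delta < 6\pi/5$, and for $3\delta \in (\pi, 6\pi/5)$ the eigenphase $3\delta$ is still the unique nonzero eigenphase, hence still the maximum, so $S = \mathrm{span}\{\ket{j}\}$ as needed.

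The main obstacle, and the only real subtlety, is getting the bookkeeping on $t$ right: the statement of Row~2 concerns algorithms with $\Omega(1/\gamma^2)$ advice states, but Theorem~\ref{thm: gndeltat lower bound} only applies when $t = o(1/\gamma^2)$. The resolution — already flagged in the paper's footnote about the constant $c$ — is that the matching \emph{upper} bound in Lemma~\ref{lemma: 24 ub} uses $O(1/\gamma^2)$ copies, so ``many'' there means precisely $\Theta(1/\gamma^2)$, and more copies do not help; the lower bound we actually need is for the threshold amount $t = \Theta(1/\gamma^2)$, which one reaches by noting the adversary bound in Theorem~\ref{thm: gndeltat lower bound} in fact tolerates $t \leq c/\gamma^2$ for a small universal constant $c$ (inspect its proof in Appendix~\ref{app: adversary proof}). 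So I would state the reduction for $t \le c/\gamma^2$ and invoke that refined version, yielding $C = \Omega(1/\gamma\delta)$ and completing the proof.
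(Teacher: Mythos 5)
There is a genuine gap in your argument, and it is precisely at the point you flag as ``the main obstacle.''

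Row~2 concerns algorithms with \emph{many} advice states, meaning $t=\Omega(1/\gamma^2)$ and in particular $t$ may be arbitrarily large (say $t=N^{100}$). Theorem~\ref{thm: gndeltat lower bound} only gives a lower bound when $t=o(1/\gamma^2)$, because its advice states are \emph{input-dependent}: given $\omega(1/\gamma^2)$ copies of $\gamma\ket{j}+\sqrt{1-\gamma^2}\ket{0}$, one can simply measure them in the computational basis to recover~$j$ and solve $\for$ with zero queries to $U$. So your reduction to $\for_{1/\gamma^2,3\delta,t}$ with $t$ equal to the number of advice states gives no bound at all once $t=\omega(1/\gamma^2)$. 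Your attempt to dodge this via the footnote does not work: that footnote refines the ``few advice'' rows (1, 3, 5, 7), replacing $o(1/\gamma^2)$ by $\leq c/\gamma^2$; it does not say that the Row~2 lower bound only needs to hold for $t\leq c/\gamma^2$. The Row~2 statement must hold for \emph{all} algorithms in the $\Omega(1/\gamma^2)$-advice regime, including those with far more copies.

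The paper's resolution, which your proof is missing, is to choose the hard instance so that the advice state is \emph{independent} of the input. Restrict the input promise to $U\in\{I,\,M_{j,3\delta}: j\in\{1,\dots,1/\gamma^2-1\}\}$ and give, as advice, the uniform superposition over the first $1/\gamma^2$ basis states. This single fixed state has overlap $\geq\gamma$ with the top eigenspace for \emph{every} input in the promise, so it is a legitimate advice state for the $\maxqpe$ task in Row~2. Because it is input-independent, the algorithm can prepare any number of copies of it at zero cost, so advice is useless and one may take $t=0$. One then reduces to $\for_{1/\gamma^2,3\delta,0}$ and invokes Theorem~\ref{thm: gndeltat lower bound} with $t=0$, obtaining $C=\Omega(1/\gamma\delta)$ regardless of how many advice states the original algorithm was given. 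That input-independence step is the idea your proof needs and does not have.
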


\begin{proof}
    We prove the required lower bound for $\maxqpe_{N, \delta}$ with inputs satisfying the promise that $U \in \cbra{I_N, \cbra{M_{j, 3\delta} : j \in \cbra{1, 2, \dots, 1/\gamma^2 - 1}}}$. Because of this assumption, we may take the uniform superposition over the first $1/\gamma^2$ computational basis states as our advice state: the algorithm should work with such an advice state, since it has overlap at least~$\gamma$ with the top eigenspace for each of the possible~$U$. However, an algorithm can prepare such advice states at no cost, so we may assume that the algorithm has no access to advice at all. As in the previous proof, this gives an algorithm of the same cost for $\for_{1/\gamma^2, 3\delta, 0}$ (ignoring all other dimensions). Theorem~\ref{thm: gndeltat lower bound} with $N = 1/\gamma^2$  and $t = 0$ yields the required lower bound of $\Omega(1/\gamma\delta)$.
\end{proof}

\begin{lemma}[Lower bound for Rows 5,7]\label{lemma:57lb}
Row 5 (and hence Row 7) has a lower bound of $\Omega(\sqrt{N}/\delta)$.
\end{lemma}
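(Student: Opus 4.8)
The plan is to reduce from the state-advice lower bound of Lemma~\ref{lemma:13lb}, using the fact that $o(1/\gamma)$ applications of an advice-preparing unitary $A$ can be no more powerful than $o(1/\gamma^2)$ copies of the advice state $\ket{\alpha}$ (up to constants), so that any cost-$C$ algorithm in the unitary-advice, known-eigenbasis setting with few advice unitaries yields a cost-$C$ algorithm in the state-advice setting with few advice states. Concretely: suppose $\A$ solves $\maxqpe_{N,\delta}$ in Row~5 using $k = o(1/\gamma)$ applications of $A$ and $A^{-1}$ and cost $C$. I want to build an algorithm $\A'$ for Row~1 that uses $t = o(1/\gamma^2)$ copies of $\ket{\alpha}$, simulates $\A$, and has the same cost~$C$; then Lemma~\ref{lemma:13lb} forces $C = \Omega(\sqrt{N}/\delta)$.

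The key step is simulating the unitary $A$ (and $A^{-1}$) given only copies of $\ket{\alpha} = A\ket{0}$. Here I would invoke exactly the mechanism used for the Row~4-from-Row~8 reduction in the paper: the technique of Lloyd, Mohseni and Rebentrost~\cite{lmr:pca}, which uses $O(1/\gamma)$ copies of a state to implement one reflection about that state, hence $O(1/\gamma)$ copies per application of $A$ or $A^{-1}$ after conjugating the reflection appropriately; with $k = o(1/\gamma)$ such applications the total copy count is $o(1/\gamma^2)$, as required. Actually, the cleanest route is to observe that the lower bound we are reducing to — Theorem~\ref{thm: gndeltat lower bound} via $\for_{N,3\delta,t}$ — already only uses the product advice state $\ket{\alpha}^{\otimes t}$, and in the $\for$ instances the advice state has the explicit form $\gamma\ket{j} + \sqrt{1-\gamma^2}\ket{0}$. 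So I would instead argue directly at the level of the hard instance family: restrict attention to inputs $U \in \{I_N\} \cup \{M_{j,3\delta}\}$ as in the proof of Lemma~\ref{lemma:13lb}, note that for these instances the natural advice-preparing unitary $A$ is a fixed rotation in the $\mathrm{span}\{\ket{0},\ket{j}\}$ plane, and show that $k$ applications of such an $A$ can be converted, using $\cite{lmr:pca}$-style reflection simulation, into $\Theta(k/\gamma) = o(1/\gamma^2)$ uses of the state $\ket{\alpha}$, giving a cost-$C$ algorithm for $\for_{N,3\delta,o(1/\gamma^2)}$ and invoking Theorem~\ref{thm: gndeltat lower bound}.

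The main obstacle I anticipate is handling $A^{-1}$ and the fact that $\cite{lmr:pca}$ natively simulates the \emph{reflection} $2\ketbra{\alpha}{\alpha} - I$ rather than $A$ itself: to get $A$ from reflections one writes $A = (2\ketbra{\alpha}{\alpha}-I)\,R_0$ where $R_0 = 2\ketbra{0}{0}-I$ is free (input-independent), but this identity only holds when acting on the two-dimensional subspace $\mathrm{span}\{\ket{0},\ket{\alpha}\}$, so I need to confirm that throughout the simulation the relevant register stays (or can be arranged to stay, by adding a flag qubit recording whether $A$ has been applied an even or odd number of times) inside that subspace, or else argue that the generalized-maximum-finding / phase-estimation structure of $\A$ only ever needs $A$ applied to $\ket{0}$ and $A^{-1}$ applied to states in the image of $A$. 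A second, more routine issue is bookkeeping the approximation error: each simulated reflection is only $\epsilon$-close in trace distance for $\epsilon$ scaling like (number of copies)$^{-1}$ per the $\cite{lmr:pca}$ analysis, and I need the total accumulated error over all $C$ steps and all $k$ simulated unitary-calls to stay below, say, $1/100$, which is achievable since we only need to beat the constant success gap of $2/3$ and can afford polynomially many copies per reflection within the $o(1/\gamma^2)$ budget — but this should be checked rather than asserted. Once the simulation is in place, the conclusion is immediate: $\A'$ solves Row~1 with cost $C$ and $o(1/\gamma^2)$ advice copies, so $C = \Omega(\sqrt{N}/\delta)$, and Row~7 follows a fortiori since its eigenbasis is unknown.
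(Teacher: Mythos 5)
Your approach differs substantially from the paper's and, as written, has a concrete gap. The paper does not simulate $A$ from copies of $\ket{\alpha}$ at all: it observes that for the hard instance family $U\in\{I_N\}\cup\{M_{j,3\delta}\}$, the algorithm can \emph{construct a good advice unitary from $U$ itself}. Namely, $U^k$ with $k=\pi/(3\delta)$ is a standard Grover phase query, so $O(\gamma\sqrt{N})$ Grover iterations amplify the $\ket{j}$-amplitude of the uniform superposition to $\geq\gamma$, and each advice-unitary call costs only $O(\gamma\sqrt{N}/\delta)$ applications of $U$. With $t=o(1/\gamma)$ such calls the extra cost is $o(\sqrt{N}/\delta)$, so the algorithm reduces to one with \emph{no advice at all}, i.e.\ $\for_{N,3\delta,0}$, and Theorem~\ref{thm: gndeltat lower bound} finishes the proof. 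No LMR, no reduction to Row~1.

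Your route — reduce Row~5 to Row~1 by simulating $A$ and $A^{-1}$ via Lloyd--Mohseni--Rebentrost reflections, using $O(1/\gamma)$ copies of $\ket{\alpha}$ per call — has a genuine gap at the step you yourself flag. The decomposition $A = (2\ketbra{\alpha}{\alpha}-I)R_0$ is false even on $\mathrm{span}\{\ket{0},\ket{\alpha}\}$: if the angle between $\ket{0}$ and $\ket{\alpha}$ is $\phi$, then $(2\ketbra{\alpha}{\alpha}-I)R_0$ is the rotation by $2\phi$, so it equals $A^2$ rather than $A$; directly, $(2\ketbra{\alpha}{\alpha}-I)R_0\ket{0}=2\braket{\alpha}{0}\ket{\alpha}-\ket{0}\neq\ket{\alpha}$ unless $\ket{\alpha}\propto\ket{0}$. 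So LMR (which gives you reflections about $\ket{\alpha}$) plus a free reflection about $\ket{0}$ cannot produce a state-preparation unitary $A$ satisfying $A\ket{0}=\ket{\alpha}$. The flag-qubit parity trick doesn't repair this, since the mismatch is not about even/odd counts but about the factor-of-two in the rotation angle. There is a plausible patch — in the lower-bound instance you are free to choose $A$, so you could take $A=2\ketbra{\beta}{\beta}-I$ where $\ket{\beta}$ is the unit bisector of $\ket{0}$ and $\ket{\alpha}$ (then $A\ket{0}=\ket{\alpha}$, $A=A^{-1}$, and $A$ is an LMR-simulable reflection about a state with overlap $\approx\gamma/2$) — but this requires resetting up the $\for$ hard instances with $\ket{\beta}$ as the advice state, and it is not what you wrote. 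Even once that is fixed, one would also need to argue that controlled applications of $A$ (which the Row~5 model permits) can be LMR-simulated with the same budget. The paper's Grover-based argument avoids all of these complications.
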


\begin{proof}
    Towards the required lower bound, consider a cost-$C$ algorithm $\mathcal{A}$ solving $\maxqpe_{N, \delta}$ with inputs satisfying the promise $U \in \cbra{I_N, \cbra{M_{j, 3\delta} : j \in \cbra{1, 2, \dots, N-1}}}$, and with $t = o(1/\gamma)$ accesses to a unitary that prepares an advice state that has overlap at least $\gamma$ with the target eigenspace. We want to construct an algorithm $\A'$ for $\maxqpe_{N, \delta}$ with the same promised inputs that uses \emph{no} advice, and with cost not much larger than that of~$\A$. 
    Note that we may assume $\gamma = o(1)$, since otherwise $t = 0$, so then $\A$ itself already uses no advice.

 We first show how an algorithm can itself implement a good-enough advice unitary~$A$ quite cheaply. 
 Assuming without loss of generality that $k=\pi/(3\delta)$ is an integer, $U^k$ is actually a ``phase query'': if $U = M_{j, 3\delta}$, then $U^k$ is the diagonal matrix with  $(e^{i 3\delta})^k=e^{i\pi}=-1$ in the $j$th entry and 1s elsewhere; and if $U=I$ then $U^k=I$. Thus $A$ can start by mapping $\ket{0}$ to a uniform superposition over all indices, and then use Grover's algorithm with $U^k$ as the phase-query operator to amplify the amplitude of $\ket{j}$ to $\geq\gamma$.
 We know that $O(\gamma\sqrt{N})$ ``Grover iterations'' suffice for this (see, for example,~\cite[Section~7.2]{lecturenotes} for details).
 Each Grover iteration would use one phase-query $U^k$, so the overall cost (number of applications of $U$ and $U^{-1}$) of this advice unitary is $k\cdot O(\gamma\sqrt{N})=O(\gamma\sqrt{N}/\delta)$.
 If $U=I$, the state just remains the uniform superposition that Grover's algorithm starts with.
 
 We now have all components to describe $\A'$: Run $\A$, and whenever $\A$ invokes an advice unitary, use the above~$A$. Since $\A$ uses at most $t$ advice unitaries, the cost of $\A'$ is at most $C + t\cdot O(\gamma\sqrt{N}/\delta)$. Note that $\A'$ uses no advice at all anymore, and solves $\maxqpe_{N,\delta}$ under the promise that the input unitary satisfies $U \in \cbra{I_N, \cbra{M_{j, 3\delta} : j \in \cbra{1, 2, \dots, N-1}}}$. Again, this immediately yields an algorithm of the same cost for $\for_{N,3\delta,0}$ as in the previous two proofs. Theorem~\ref{thm: gndeltat lower bound} now implies
    $$
    C + O(t\gamma\sqrt{N}/\delta) = \Omega(\sqrt{N}/\delta),
    $$
    and hence $C=\Omega(\sqrt{N}/\delta)$ since $t = o(1/\gamma)$ ($t\leq c/\gamma$ for sufficiently small constant $c$ also suffices).
\end{proof}

\begin{lemma}[Lower bound for Rows 6,8]\label{lemma:68lb}
Row 6 (and hence Row 8) has a lower bound of $\Omega\rbra{1/\gamma\delta}$.
\end{lemma}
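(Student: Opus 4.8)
The plan is to follow the same reduction strategy as in the proof of Lemma~\ref{lemma:57lb}, but now the advice is given as a unitary~$A$ with access count $t = o(1/\gamma)$ and we restrict attention to the low-dimensional family $U \in \cbra{I_M, \cbra{M_{j, 3\delta} : j \in \cbra{1, 2, \dots, M-1}}}$ with $M = 1/\gamma^2$ (ignoring all other coordinates). The key point is that on this restricted family the uniform superposition over the first $M$ computational basis states already has overlap exactly $\gamma = 1/\sqrt{M}$ with the top eigenspace, for \emph{every} admissible~$U$. Hence an adequate advice unitary~$A$ is simply the fixed, $U$-independent unitary mapping $\ket{0}$ to this uniform superposition, which costs~$0$ to implement. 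Therefore any cost-$C$ algorithm $\A$ for $\maxqpe_{M,\delta}$ with $t$ advice-unitary accesses yields a cost-$C$ algorithm $\A'$ for the same task that uses \emph{no} advice at all, just by replacing each invocation of the advice unitary with this free fixed unitary.

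From there the argument is identical to the earlier lemmas: $\A'$ solves $\maxqpe_{M,\delta}$ under the stated promise with no advice, and the usual output-thresholding reduction (output~$1$ if the estimated phase lies in $[-\delta,\delta] \bmod 2\pi$, else~$0$; valid since $\delta < 2\pi/5$ makes $[-\delta,\delta]$ and $[2\delta,4\delta]$ disjoint mod~$2\pi$) turns it into a cost-$C$ algorithm for $\for_{M, 3\delta, 0}$. Applying Theorem~\ref{thm: gndeltat lower bound} with $N = M = 1/\gamma^2$ and $t = 0$ gives $C = \Omega(\sqrt{M}/(3\delta)) = \Omega(1/\gamma\delta)$, which is the claimed bound; the lower bound for Row~8 follows because a known eigenbasis only gives the algorithm more power.

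I do not expect a genuine obstacle here, since this is essentially a specialization of Lemma~\ref{lemma:24lb} (the ``copies of the advice state'' version) to the advice-unitary model, exploiting the fact that for this particular hard family the advice state is basis-independent and trivially preparable. The one thing to double-check is that the restriction to dimension $M = 1/\gamma^2$ is compatible with the ambient problem being $N$-dimensional: as in the proof of Lemma~\ref{lemma:24lb}, we simply embed the $M$-dimensional hard instances into the first $M$ coordinates of the $N$-dimensional space and leave the remaining coordinates acted on trivially, so that $\gamma = 1/\sqrt{M}$ remains a valid overlap parameter. A minor point to state carefully is that we need $1/\gamma^2$ to be an integer (or round it), and that $\gamma \geq 1/\sqrt{N}$ so the embedding makes sense; both are harmless.
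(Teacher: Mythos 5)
Your proposal is correct and follows essentially the same route as the paper: restrict to the $1/\gamma^2$-dimensional hard family, observe that the uniform superposition over those basis states is a valid, $U$-independent, free-to-implement advice unitary (so the advice can be discarded entirely), and then reduce to $\for_{1/\gamma^2,3\delta,0}$ and invoke Theorem~\ref{thm: gndeltat lower bound}. The only cosmetic difference is that the paper delegates the final reduction step to Lemma~\ref{lemma:57lb} rather than spelling out the thresholding again.
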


\begin{proof}
    Just as in the proof of Lemma~\ref{lemma:24lb}, we may assume $N = 1/\gamma^2$ by only allowing input unitaries of the form $U \in \cbra{I_N, \cbra{M_{j, 3\delta} : j \in \cbra{1, 2, \dots, 1/\gamma^2-1}}}$. With this assumption, we may assume that we have no access to advice (i.e., $t = 0$) since an algorithm can prepare a good-enough advice state at no cost, namely the uniform superposition over all $N=1/\gamma^2$ basis states. Lemma~\ref{lemma:57lb} now yields the required lower bound of $\Omega(1/\gamma\delta)$.
\end{proof}

Finally we prove an optimal lower bound for the Unitary recurrence time problem, matching She and Yuen's upper bound (Theorem~\ref{thm: SY}), resolving one of their open problems~\cite[Section~2]{SY23}.

\begin{proof}[Proof of Theorem~\ref{thm: recurrence time lower bound}]
Consider an algorithm $\mathcal{A}$ solving the $(t,\delta)$-recurrence time problem. Restrict to inputs of the form $U \in \cbra{I_N, \cbra{M_{j, 3\delta/t} : j \in \cbra{1, 2, \dots, N-1}}}$. 
When $U = I$ we have $U^t = I$. When $U = M_{j, 3\delta/t}$, we have $\| U^t - I\| = |1 - e^{3i\delta}| \geq \delta$ for all $\delta \in [0,1]$. Thus, $\mathcal{A}$ solves $\for_{N, 3\delta/t, 0}$. Theorem~\ref{thm: gndeltat lower bound} yields the required lower bound of $\Omega(t\sqrt{N}/\delta)$.
\end{proof}

\section{Upper bounds for maximum phase estimation}\label{sec: upper}

In this section we show upper bounds on the quantum complexity of our 8 variants of maximum phase estimation (see Section~\ref{sec: model}, Definition~\ref{def: maxphaseestimation} and Table~\ref{table: bounds}).
We require the following generalized maximum-finding procedure, adapted from \cite[Theorem~49]{AGGW20}; we changed their wording a bit and modified it from minimum-finding to maximum-finding.

\begin{lemma}[Generalized maximum-finding {{\cite[Theorem~49]{AGGW20}}}]\label{lemma: genmaxfind}
There exists a quantum algorithm $\cal M$ and constant $C>0$ such that the following holds.
Suppose we have a $q$-qubit unitary $V$ such that 
\[
V\ket{0}=\sum_{k=0}^{K-1}\ket{\psi_k}\ket{x_k},
\]
where $x_0>x_1>\cdots> x_{K-1}$ are distinct real numbers (written down in finite precision), and the $\ket{\psi_k}$ are unnormalized states.
Let $X$ be the random variable obtained if we were to measure the last register, so $\Pr[X=x_k]=\norm{\ket{\psi_k}}^2$.
Let $M\geq C/\sqrt{\Pr[X \geq x_j]}$ for some $j$. 
Then $\cal M$ uses at most $M$ applications of $V$ and $V^{-1}$, and $O(qM)$ other gates, 
and outputs a state $\ket{\psi_i}\ket{x_i}$ (normalized) such that $x_i\geq x_j$ with probability at least $3/4$ (in particular, if $j=0$ then $\cal M$ outputs the maximum).
\end{lemma}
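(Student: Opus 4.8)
The plan is to realize $\mathcal{M}$ as a variant of the Dürr--Høyer quantum minimum-finding procedure, adapted to \emph{maximum}-finding and to the ``data-generating unitary'' access model of~\cite{AGGW20}, and to analyze it via amplitude amplification. The algorithm maintains a classical threshold $y$, initialized below all the $x_k$, together with the normalized state and value it has found so far. It proceeds in rounds. In a round with current threshold $y$, let $\Pi_y$ be the projector onto the span of the data-register basis states whose recorded value exceeds $y$; since the $x_k$ are written in the register, the reflection about $\Pi_y$ is a $V$-free operation built from a comparison circuit of $O(q)$ gates. Using $V\ket{0}$ as the starting state and $\Pi_y$ as the ``good'' subspace, I would run amplitude amplification with a number of iterations drawn from the geometrically growing schedule $1,2,4,\dots$ of Boyer--Brassard--Høyer--Tapp (so as not to need to know the good-subspace weight), measure the data register, and, if the outcome value $x_k$ satisfies $x_k>y$, update $y\leftarrow x_k$ and store $\ket{\psi_k}\ket{x_k}/\norm{\ket{\psi_k}}$. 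The procedure runs for an appropriate number of rounds, hard-capped so that the total number of applications of $V$ and $V^{-1}$ never exceeds $M$, and at the end outputs the stored state.

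For correctness, write $w_k=\norm{\ket{\psi_k}}^2$ and $p^*=\Pr[X\ge x_j]=\sum_{k\le j}w_k$, and let $q=\Pr[X>y]$ be the weight of the current good subspace, so $q=1$ initially. Two things need to be established: (i) the threshold rises quickly, and (ii) the total cost is $O(1/\sqrt{p^*})$. For (i), a successful amplitude-amplification round returns an index $k$ with $x_k>y$ distributed proportionally to $w_k$ among the good indices; I would then re-run the Dürr--Høyer ``rank roughly halves'' argument in this \emph{weighted} setting to conclude $\mathbb{E}[\,\Pr[X>x_k]\mid y\,]\le\tfrac12\,\Pr[X>y]$ up to an absolute constant. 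Hence after $O(\log(1/p^*))$ successful rounds one has $q<p^*$ with probability at least $7/8$; and since any threshold $y$ with $\Pr[X>y]<p^*=\Pr[X\ge x_j]$ must satisfy $y\ge x_j$, the stored value is then $\ge x_j$, as required (the case $j=0$ giving the true maximum). For (ii), one round with current good-weight $q$ costs $O(1/\sqrt{q})$ applications of $V,V^{-1}$ in expectation via the BBHT schedule; because the successive good-weights shrink geometrically in expectation, the sum $\sum_{\text{rounds}}1/\sqrt{q}$ is dominated by its final term and is $O(1/\sqrt{p^*})$ in expectation, hence $O(1/\sqrt{p^*})$ with probability at least $7/8$ by Markov's inequality. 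A union bound over (i) and (ii) gives success probability at least $3/4$; choosing the constant $C$ in $M\ge C/\sqrt{\Pr[X\ge x_j]}$ large enough makes the hard cap at $M$ applications harmless. The $O(qM)$ extra gates come from the $O(q)$-size comparison and reflection circuits used once per application of $V$.

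The main obstacle is the weighted-sampling issue in step (i): in textbook quantum minimum finding the surviving elements are sampled \emph{uniformly}, whereas here amplitude amplification samples the good register proportionally to $\norm{\ket{\psi_k}}^2$, so the clean ``rank halves'' lemma must be replaced by a weighted version. I expect this still to go through --- concentrating all the weight on a few indices is the extreme case, and even then one obtains a constant-factor version of the halving bound (or one makes the round schedule slightly more conservative to absorb the constant) --- but it is the step that requires genuine checking rather than quotation. A secondary nuisance is converting the \emph{expected}-cost guarantee of BBHT-style search into a \emph{high-probability} bound compatible with the hard cap of $M$ applications; this is handled by the geometric decay of the good-weights together with a Markov bound, at the price of a larger constant $C$, exactly as in~\cite{AGGW20}.
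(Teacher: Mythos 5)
The paper does not actually prove this lemma: it is imported (modulo swapping min for max and some rewording) from \cite[Lemma~48]{AGGW20}, so there is no in-paper proof to compare against. Your sketch is essentially a reconstruction of the argument in that reference: D\"urr--H{\o}yer threshold search driven by exponential-schedule (BBHT-style) amplitude amplification on $V\ket{0}$, with the good subspace cut out by a comparison against the current threshold in the value register, and that is the right route. Two remarks on the steps you flag. The weighted ``halving'' step is in fact exact and costs no constant: writing $w_k=\norm{\ket{\psi_k}}^2$ and $p=\Pr[X>y]$ for the current good weight, a successful round returns index $k$ with probability $w_k/p$ among good indices, so $\E\bigl[\Pr[X>x_k]\bigr]=\frac{1}{p}\sum_{k}w_k\sum_{l<k}w_l\le\frac{1}{p}\cdot\frac{p^2}{2}=\frac{p}{2}$. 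The genuinely delicate point is the cost accounting, and your ``the sum is dominated by its final term'' heuristic is not rigorous as stated: $z\mapsto 1/\sqrt{z}$ is convex and the successive good weights are random, so geometric decay \emph{in expectation} does not by itself control $\E[\sum_r 1/\sqrt{q_r}]$. The standard fix (used in \cite{AGGW20} and already in D\"urr--H{\o}yer) is to sum over potential thresholds rather than over rounds: the probability that $x_k$ is ever selected as a threshold equals $w_k/\Pr[X\ge x_k]$, the round started at threshold $x_k$ has expected cost $O(1/\sqrt{\Pr[X>x_k]})$, and the resulting sum over $k>j$ telescopes (by comparison with $\int z^{-3/2}\,dz$) to $O(1/\sqrt{\Pr[X\ge x_j]})$; Markov plus the hard cap at $M$ applications then yields the stated constant success probability. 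With that bookkeeping substituted for the heuristic, your argument is the one in the cited source. (Minor: you overload $q$ for both the qubit count and the good-subspace weight.)
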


\begin{remark}\label{rmk:v_usage_reflections}
    It may be verified by going through \cite[Lemma~48 \& Theorem~49]{AGGW20} that the only applications of $V$ and $V^{-1}$ used by $\cal M$ are to prepare $V\ket{0}$ starting from $\ket{0}$, and to reflect about the state $V\ket{0}$.
\end{remark}

\noindent
We can use generalized maximum-finding to approximate the largest eigenphase starting from the ability to prepare a superposition of eigenstates (possibly with some additional workspace qubits): 

\begin{lemma}\label{lemma: maxthetafind}
There exists a quantum algorithm $\cal B$ such that the following holds.
Suppose we have an $N$-dimensional unitary $U$ with (unknown) eigenstates $\ket{u_0},\ldots,\ket{u_{N-1}}$ and associated eigenphases $\theta_0,\ldots,\theta_{N-1}\in[0,2\pi-2\delta)$.
Suppose we also have a unitary $A$ such that
\[
A\ket{0}=\sum_{j=0}^{N-1}\alpha_j\ket{u_j}\ket{\phi_j},
\]
where $\sum_{j:\theta_j=\theta_{\max}}|\alpha_j|^2\geq\gamma^2$ and the $\ket{\phi_j}$ are arbitrary (normalized) states.
Then $\cal B$ uses at most $O(1/\gamma)$ applications of $A$ and $A^{-1}$, and $O(\log(1/\gamma)/\gamma\delta)$ applications of $U$ and $U^{-1}$, 
and with probability at least $2/3$ outputs a number $\theta\in[\theta_{\max}-\delta,\theta_{\max}+\delta]$.
\end{lemma}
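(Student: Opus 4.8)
The plan is to combine quantum phase estimation with the generalized maximum-finding procedure of Lemma~\ref{lemma: genmaxfind}. The key idea is that running $A$ and then applying Kitaev's phase estimation (with precision roughly $\delta$ and constant error probability, so cost $O(1/\delta)$ applications of $U,U^{-1}$) to the first register of $A\ket{0}=\sum_j\alpha_j\ket{u_j}\ket{\phi_j}$ produces a state of the form $\sum_j\alpha_j\ket{u_j}\ket{\phi_j}\ket{\tilde\theta_j}$, where $\ket{\tilde\theta_j}$ is (close to) a superposition concentrated on estimates within $\delta$ of $\theta_j$. Write this composite unitary as $V$, with the phase-estimate register playing the role of the ``value'' register $x_k$ in Lemma~\ref{lemma: genmaxfind}. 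If we could measure that register we would see a value $\geq \theta_{\max}-\delta$ with probability at least $\gamma^2$ (the total weight on eigenstates in the top eigenspace). Hence $M=O(1/\gamma)\geq C/\sqrt{\gamma^2}$ applications of $V$ and $V^{-1}$ suffice for $\cal M$ to output, with probability $\geq 3/4$, a value that is $\geq\theta_{\max}-\delta$; and since every phase estimate is within $\delta$ of some genuine eigenphase, the output is automatically $\leq\theta_{\max}+\delta$ as well. Each use of $V$ or $V^{-1}$ costs one application of $A$ (or $A^{-1}$) plus $O(1/\delta)$ applications of $U,U^{-1}$, so the totals are $O(1/\gamma)$ applications of $A,A^{-1}$ and $O(1/\gamma\delta)$ applications of $U,U^{-1}$, matching the claimed bounds up to the logarithmic factor.

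The main technical obstacle is that Kitaev's phase estimation does not produce an \emph{exact} value register: on eigenstate $\ket{u_j}$ it outputs a superposition over many estimates, not all within $\delta$, and the values produced for different $\theta_j$ need not be distinct, whereas Lemma~\ref{lemma: genmaxfind} is stated for a superposition over \emph{distinct} real numbers $x_0>x_1>\cdots$ with a clean measurement distribution. I would handle this in two steps. First, I would run phase estimation to somewhat higher precision, say $\delta/100$, and with small enough constant error probability; the standard analysis then gives that, conditioned on the eigenstate register being $\ket{u_j}$, the estimate lies within $\delta/100$ of $\theta_j$ with probability $\geq 1-\eps_0$ for a small constant $\eps_0$. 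Second, I would argue that the bad branches (estimate far from $\theta_j$) carry total weight at most $\eps_0$, so they can be absorbed into the error budget: effectively $V\ket{0}$ is $\eps_0$-close to a state of the form required by Lemma~\ref{lemma: genmaxfind} once we group estimates by value. Concretely one coarse-grains the estimate register to a grid of spacing $\delta/10$, so that the finitely many grid values become the distinct $x_k$'s; the weight on grid cells intersecting $[\theta_{\max}-\delta/10,\,2\pi)$ is then at least $\gamma^2-\eps_0=\Omega(\gamma^2)$, which is what $\cal M$ needs, and any grid value returned lies within $\delta$ of $\theta_{\max}$.

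There is one more subtlety worth spelling out: the logarithmic overhead $O(\log(1/\gamma)/\gamma\delta)$ rather than $O(1/\gamma\delta)$. This comes from needing the constant $\eps_0$ in the phase-estimation step to be small enough that, after $\cal M$ makes $O(1/\gamma)$ adaptive calls to $V$ (including reflections about $V\ket{0}$, cf.~Remark~\ref{rmk:v_usage_reflections}), the accumulated error from the bad branches stays below a constant; making phase estimation reliable to error $\eps_0=\Theta(\gamma)$ (or $\Theta(\gamma^2)$) costs an extra $O(\log(1/\gamma))$ factor per invocation by the median trick of Theorem~\ref{thm: qpe upper bound}. Alternatively, one observes that the reflection about $V\ket{0}$ only requires preparing $V\ket{0}$ and its inverse (Remark~\ref{rmk:v_usage_reflections}), so the error per call does not compound multiplicatively in a way that would need more than a logarithmic correction. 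After the overlap-amplification stage outputs a branch $\ket{\psi_k}\ket{x_k}$ with $x_k\geq\theta_{\max}-\delta$, I simply output $x_k$ (or its nearest grid point), boost the $3/4$ success probability of $\cal M$ to $2/3$ trivially, and conclude. I would then close by noting that this lemma is exactly the engine behind the upper bounds of Rows~5--8: for Row~8 take $A$ to be the given advice unitary directly, and for the others feed in the appropriate cheaply-preparable $A$.
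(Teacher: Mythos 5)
Your proposal is correct and follows essentially the same approach as the paper: run phase estimation on the $\ket{u_j}$ register of $A\ket{0}$ to write eigenphase estimates into a value register, feed the resulting composite unitary to the generalized maximum-finding procedure of Lemma~\ref{lemma: genmaxfind}, and pay an extra $O(\log(1/\gamma))$ per call (via Theorem~\ref{thm: qpe upper bound}) so that the per-call error is small enough that it stays bounded after $O(1/\gamma)$ invocations. Two minor remarks: the coarse-graining to a grid is not needed, since a finite-precision phase register already produces discrete values; and the correct target is $\eta=\Theta(\gamma^2)$ (not $\Theta(\gamma)$) for the phase-estimation error probability, because the operator-norm deviation of each call scales as $\sqrt{\eta}$ and there are $O(1/\gamma)$ calls whose errors add linearly, which is exactly how the paper sets it up with its ``cleaned-up'' unitary $V$.
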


\begin{proof}
As mentioned in footnote~\ref{note:hannote}, by multiplying $U$ with the known phase $e^{i\delta}$, we may assume all eigenphases $\theta_j$ are in the interval $[\delta,2\pi-\delta)$.
Let $\tilde{V}$ be the unitary that applies phase estimation with unitary~$U$, precision $\delta$, and small error probability $\eta$ (to be determined later), on the first register of the state $A\ket{0}$, writing the estimates of the phase in a third register.
Then
\[
\tilde{V}\ket{0}=\sum_{j=0}^{N-1}\alpha_j\ket{u_j}\ket{\phi_j}\ket{\tilde{\theta_j}},
\]
where, for each $j$, $\ket{\tilde{\theta_j}}$ is a superposition over estimates of $\theta_j$, 
most of which are $\delta$-close to $\theta_j$.

For the purposes of analysis, we would like to define a ``cleaned up'' unitary~$V$ (very close to~$\tilde{V}$) that doesn't have any estimates with error $>\delta$ in $V\ket{0}$
Let $\ket{\tilde{\theta_j}'}$ be the state obtained from $\ket{\tilde{\theta_j}}$ by removing the estimates that are more than $\delta$-far from $\theta_j$, and renormalizing.
Because we ran phase estimation with error probability $\leq\eta$, it is easy to show that 
$\norm{\ket{\tilde{\theta_j}'}-\ket{\tilde{\theta_j}}}=O(\sqrt{\eta})$.
Then there exists\footnote{This is fairly easy to show, see e.g.~\cite[proof of Theorem 2.4 in Appendix~A]{chen&wolf:linreg}.} a unitary $V$ such that $\norm{\tilde{V}-V}=O(\sqrt{\eta})$
and
\[
V\ket{0}=\sum_{j=0}^{N-1}\alpha_j\ket{u_j}\ket{\phi_j}\ket{\tilde{\theta_j}'}=\sum_{k=0}^{K-1}\ket{\psi_k}\ket{x_k},
\]
where the $x_k$ are the distinct estimates that have support in the last register, and the $\ket{\psi_k}$ are (unnormalized) superpositions of the $\ket{u_j}\ket{\phi_j}$'s that are associated with those estimates.

Algorithm $\cal B$ now applies the maximum-finding algorithm~$\cal M$ of Lemma~\ref{lemma: genmaxfind} with the unitary~$\tilde{V}$. Let us first analyze what would happen if $\cal B$ used the cleaned-up $V$ instead of~$\tilde{V}$.
Because we assumed that $\theta_j\in[\delta,2\pi-\delta)$ for all $j$, and we have removed all estimates of $\theta_j$ with error $>\delta$, the largest $x_k$'s in $V\ket{0}$ are good estimates of $\theta_{\max}$ (with no worries about potential ``wrapping around'' $2\pi$ of the estimates).
Let $X$ denote the random variable obtained if we measure the last register of~$V\ket{0}$, and note that $\Pr[X\geq \theta_{\max}-\delta]\geq\sum_{j:\theta_j=\theta_{\max}}|\alpha_j|^2\geq\gamma^2$ because all estimates in $V\ket{0}$ have error $\leq\delta$. Hence $\cal B$ would use $O(1/\gamma)$ applications of $V$ and $V^{-1}$ to find a $\theta\in[\theta_{\max}-\delta,\theta_{\max}+\delta]$ with success probability $\geq 3/4$.

Algorithm $\cal B$ will actually use $\tilde{V}$ and $\tilde{V}^{-1}$ instead of $V$ and $V^{-1}$, which (because errors in quantum circuits add at most linearly) incurs an overall error in operator norm of $\leq O(\sqrt{\eta})\cdot O(1/\gamma)$. Choosing $\eta\ll\gamma^2$, this overall error can be made an arbitrarily small constant. The success probability of the algorithm can drop slightly below $3/4$ now due to this error, but is still~$\geq 2/3$.

It remains to analyze the cost of $\cal B$.
Each $\tilde{V}$ uses 1 application of $A$, and $O(\log(1/\eta)/\delta)=O(\log(1/\gamma)/\delta)$ applications of $U$ and $U^{-1}$ for phase estimation (Theorem~\ref{thm: qpe upper bound}), so $\cal B$ uses $O(1/\gamma)$ applications of $A$ and $A^{-1}$, and $O(\log(1/\gamma)/\gamma\delta)$ applications of $U$ and $U^{-1}$ in total.
\end{proof}

The upper bounds for our 8 variants of phase estimation (see Table~\ref{table: bounds}) will all follow from this.
To derive these upper bounds, we start with the 4 odd-numbered rows, where it turns out the advice is not actually needed to meet our earlier lower bounds.
The next proof is basically the same as \cite[Lemma~50]{AGGW20} about estimating the minimal eigenvalue of a Hamiltonian (this improved slightly upon the earlier result of~\cite{poulin2009GibbsSamplingAndEval}; see also~\cite[Lemma~3.A.4]{gilyen:thesis}).

\begin{lemma}[Upper bound for Rows 1, 3, 5, 7]\label{lemma: odd ub}
There is an algorithm that uses no advice and solves the case in Row 3 (and hence in Rows 1, 5, and 7 as well) with cost $O(\sqrt{N}\log(N)/\delta)$.
\end{lemma}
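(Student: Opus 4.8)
The plan is to use Lemma~\ref{lemma: maxthetafind} with a trivial advice unitary. Specifically, I would take $A$ to be the unitary that maps $\ket{0}$ to the uniform superposition $\frac{1}{\sqrt N}\sum_{j=0}^{N-1}\ket{j}$ over the $N$ computational basis states of the first register (with no workspace qubits, or equivalently $\ket{\phi_j}=\ket{0}$ for all $j$). This $A$ uses zero applications of $U$, and it can be implemented with $O(\log N)$ elementary gates (a layer of Hadamards if $N$ is a power of two, or a slightly more careful construction in general), hence costs nothing in our accounting. Crucially, since the eigenstates $\ket{u_0},\dots,\ket{u_{N-1}}$ form an orthonormal basis, writing the uniform superposition in that basis as $\sum_j \alpha_j \ket{u_j}$ gives $\sum_j |\alpha_j|^2 = 1$, and in particular the mass on the top eigenspace $S$ satisfies $\sum_{j:\theta_j=\theta_{\max}}|\alpha_j|^2 \geq 1/N$, since $S$ has dimension at least $1$ and the uniform superposition has squared overlap exactly $(\dim S)/N \geq 1/N$ with any coordinate subspace of dimension $\dim S$ in the known-basis case; in the unknown-basis case the same bound holds because a fixed unit vector (the uniform superposition) has squared projection at least $1/N$ onto \emph{some} one-dimensional subspace inside any subspace of dimension $\geq 1$. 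Actually the cleanest statement: $\|P_S \ket{\text{unif}}\|^2 = \dim(S)/N \geq 1/N$ when the basis is known and $P_S$ is a coordinate projection; for the unknown-basis case I would just note that $\|P_S\ket{\text{unif}}\|^2 \geq 1/N$ need not hold for an adversarial $S$, so instead I apply Lemma~\ref{lemma: maxthetafind} with the guarantee replaced by: there exists \emph{some} eigenstate in $S$, and the uniform superposition has squared overlap $\geq 1/N$ with at least one eigenstate overall — but we need overlap with $S$ specifically.

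Let me reconsider. The correct and simplest route: apply Lemma~\ref{lemma: maxthetafind} with $\gamma := 1/\sqrt{N}$ and with $A$ the unitary preparing the uniform superposition $\frac{1}{\sqrt N}\sum_j \ket{j}$. In the known-eigenbasis case $\ket{u_j}=\ket{j}$, so $\alpha_j = 1/\sqrt N$ for all $j$, and $\sum_{j:\theta_j=\theta_{\max}}|\alpha_j|^2 = |\{j:\theta_j=\theta_{\max}\}|/N \geq 1/N = \gamma^2$, so the hypothesis of Lemma~\ref{lemma: maxthetafind} is met. Since Row~3 (unknown basis) is what we are asked to prove and it implies the others, I need the unknown-basis version: here the point is that for \emph{any} orthonormal eigenbasis $\ket{u_0},\dots,\ket{u_{N-1}}$, expanding $\frac1{\sqrt N}\sum_j\ket{j} = \sum_j \alpha_j \ket{u_j}$ we have $\sum_j |\alpha_j|^2 = 1$, and $S$ is spanned by some subset of the $\ket{u_j}$'s of size $\dim S \geq 1$; the quantity $\sum_{j:\theta_j=\theta_{\max}}|\alpha_j|^2 = \|P_S\ket{\text{unif}}\|^2$ can be as small as $0$ for an adversarial basis. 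So the uniform superposition does \emph{not} automatically have overlap $1/\sqrt N$ with a \emph{fixed} subspace $S$. The resolution used in \cite{AGGW20}: one does not need overlap with $S$; instead run Lemma~\ref{lemma: maxthetafind}'s underlying maximum-finding directly on phase estimation applied to the uniform superposition with $M = O(\sqrt N)$, and observe that $\Pr[X \geq \theta_{\max}-\delta] \geq \Pr[\text{measuring }\ket{u_j}\text{ with }\theta_j=\theta_{\max}] \geq 1/N$ — wait, this is again $\|P_S\ket{\text{unif}}\|^2$ which can be tiny.

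Here is the actual fix, and it is the one genuine subtlety: measuring the uniform superposition $\frac1{\sqrt N}\sum_j\ket{j}$ in the \emph{eigenbasis} of $U$ is the same as measuring it in the computational basis only when the bases coincide; in general the probability of landing in $S$ is $\|P_S\ket{\text{unif}}\|^2$, which is basis-independent of the \emph{computational} basis but depends on $S$, and the minimum over unit vectors $\ket v$ of $\|P_S\ket v\|^2$ is $0$. So I cannot use a single fixed $\ket v$. The standard trick (which I would invoke) is that $\|P_S\ket{\text{unif}}\|^2$ in expectation over a Haar-random, or even just over the $N$ computational basis vectors $\ket 0,\dots,\ket{N-1}$, equals $\dim(S)/N \geq 1/N$: indeed $\frac1N\sum_{i=0}^{N-1}\|P_S\ket i\|^2 = \frac1N\mathrm{tr}(P_S) = \dim(S)/N$. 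Therefore running phase estimation on $\frac1{\sqrt N}\sum_i \ket i$ and measuring produces an eigenphase drawn so that $\Pr[\text{eigenphase} = \theta_{\max}] = \|P_S\ket{\text{unif}}\|^2$, which is \emph{not} lower-bounded; but if instead one prepares the \emph{maximally mixed state} $I/N = \frac1N\sum_i \ketbra ii$ on the first register (achievable at no $U$-cost by taking a uniform superposition over a second register and tracing out, i.e. $\frac1{\sqrt N}\sum_i \ket i_{\text{reg 1}}\ket i_{\text{reg 2}}$), then the probability of measuring an eigenstate in $S$ is exactly $\mathrm{tr}(P_S I/N) = \dim(S)/N \geq 1/N$. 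That is the key idea. So I would set $A\ket 0 = \frac1{\sqrt N}\sum_{i=0}^{N-1}\ket i_{\text{reg 1}}\ket i_{\text{aux}}$, which has the form required by Lemma~\ref{lemma: maxthetafind} with $\sum_{j:\theta_j=\theta_{\max}}|\alpha_j|^2 = \|(P_S\otimes I)A\ket0\|^2 = \dim(S)/N \geq 1/N = \gamma^2$.

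With this in hand the proof is short: instantiate Lemma~\ref{lemma: maxthetafind} with this zero-$U$-cost, $O(\log N)$-gate unitary $A$ and $\gamma = 1/\sqrt N$. The lemma then outputs, with probability $\geq 2/3$, a number in $[\theta_{\max}-\delta,\theta_{\max}+\delta]\bmod 2\pi$, using $O(1/\gamma) = O(\sqrt N)$ applications of $A, A^{-1}$ (free in $U$-cost, $O(\sqrt N\log N)$ extra gates) and $O(\log(1/\gamma)/(\gamma\delta)) = O(\sqrt N\log(N)/\delta)$ applications of $U$ and $U^{-1}$. This is exactly the claimed cost $O(\sqrt N\log(N)/\delta)$, and the algorithm uses no advice (it builds $A$ itself). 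Rows~1 and~5 and~7 follow a fortiori since they only give the algorithm \emph{more} (a known basis, and/or access to advice), and the promised output guarantee and parameters are identical. The only step requiring care — and the one I would flag as the main subtlety rather than an obstacle — is the observation above that one must start from the maximally mixed state (equivalently, a maximally entangled state across an auxiliary register) rather than any fixed pure state, so that the overlap with the unknown top eigenspace $S$ is guaranteed to be $\geq 1/N$ via $\mathrm{tr}(P_S)/N \geq 1/N$; everything else is a direct substitution into Lemma~\ref{lemma: maxthetafind}.
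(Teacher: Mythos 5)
Your proposal is correct and, after some back-and-forth, arrives at exactly the paper's construction: prepare the maximally entangled state $\frac{1}{\sqrt N}\sum_j\ket{j}\ket{j}$, which equals $\frac{1}{\sqrt N}\sum_j\ket{u_j}\ket{\overline{u_j}}$ in any eigenbasis, so that the squared overlap with the top eigenspace is $\dim(S)/N\geq 1/N$ regardless of the (unknown) basis, and then invoke Lemma~\ref{lemma: maxthetafind} with $\gamma=1/\sqrt N$. The subtlety you flagged — that a fixed pure uniform superposition does \emph{not} have guaranteed overlap with an adversarial $S$, forcing the move to the maximally mixed reduced state via an auxiliary register — is precisely the point of the paper's choice of $A$.
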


\begin{proof}
Let $A$ be the unitary that maps $\ket{0}$ to the maximally entangled state in $N$ dimensions. This state can be written in any orthonormal basis, including the (unknown) eigenbasis of $U$:
\[
A\ket{0}=\frac{1}{\sqrt{N}}\sum_{j=0}^{N-1}\ket{j}\ket{j}=
\frac{1}{\sqrt{N}}\sum_{j=0}^{N-1}\ket{u_j}\ket{\overline{u_j}},
\]
where $\ket{\overline{u_j}}$ denotes the entry-wise conjugated version of $\ket{u_j}$.
Applying Lemma~\ref{lemma: maxthetafind} with this $A$, $\ket{\phi_j}=\ket{\overline{u_j}}$, and $\gamma=1/\sqrt{N}$ gives the result.
\end{proof}

The next two lemmas cover the 4 upper-bound cases where advice states/unitaries \emph{are} helpful.

\begin{lemma}[Upper bound for Rows~6, 8]\label{lemma: 68 ub}
There is a quantum algorithm that uses $O(1/\gamma)$ applications of the advice unitary (and its inverse) and solves the case in Row~8 (and hence the case in Row~6 as well) with cost $O(\log(1/\gamma)/\gamma\delta)$.
\end{lemma}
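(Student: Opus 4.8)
The plan is to obtain Lemma~\ref{lemma: 68 ub} as essentially an immediate corollary of Lemma~\ref{lemma: maxthetafind}. In the Row~8 setting we are handed black-box access to a unitary~$A$ and its inverse with the guarantee that $A\ket{0}$ equals an advice state~$\ket{\alpha}$ on the $N$-dimensional register (possibly together with some workspace that $A$ also touches), subject to $\norm{P_S\ket{\alpha}}^2\geq\gamma^2$, where $P_S$ projects onto the top eigenspace~$S$ of~$U$. The first step is to rewrite $A\ket{0}$ in the (unknown) eigenbasis $\ket{u_0},\dots,\ket{u_{N-1}}$ of~$U$: grouping the amplitudes on the workspace register, one can always write $A\ket{0}=\sum_{j=0}^{N-1}\alpha_j\ket{u_j}\ket{\phi_j}$ for complex numbers $\alpha_j$ with $\sum_j|\alpha_j|^2=1$ and normalized workspace states $\ket{\phi_j}$. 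The second step is to observe that the advice promise translates verbatim into the hypothesis of Lemma~\ref{lemma: maxthetafind}: since $P_S$ acts only on the first register and $S$ is spanned by the $\ket{u_j}$ with $\theta_j=\theta_{\max}$, we have $\norm{P_S\ket{\alpha}}^2=\sum_{j:\theta_j=\theta_{\max}}|\alpha_j|^2\geq\gamma^2$.

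With these two observations in hand, the third step is simply to invoke Lemma~\ref{lemma: maxthetafind} with this~$A$. It yields an algorithm that uses $O(1/\gamma)$ applications of $A$ and $A^{-1}$ and $O(\log(1/\gamma)/\gamma\delta)$ applications of $U$ and $U^{-1}$, and outputs a value in $[\theta_{\max}-\delta,\theta_{\max}+\delta]\bmod 2\pi$ with probability at least $2/3$ — exactly the claimed cost and correctness for Row~8. Row~6 then follows as the special case where the eigenbasis is the computational basis ($\ket{u_j}=\ket{j}$), so the identical algorithm applies. If we are granted more than $O(1/\gamma)$ advice-unitary calls, we just ignore the surplus, which is why the ``many accesses'' regime poses no difficulty.

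The one point deserving a sentence rather than being purely formal is that the algorithm is allowed to use~$A$ only as a black box. Inspecting Lemma~\ref{lemma: maxthetafind} together with Remark~\ref{rmk:v_usage_reflections}, the algorithm uses its composite unitary~$\tilde V$ — built from one application of~$A$ and $O(\log(1/\gamma)/\delta)$ applications of $U$ and $U^{-1}$ — only to prepare $\tilde V\ket{0}$ and to reflect about $\tilde V\ket{0}$, and a reflection about $\tilde V\ket{0}$ is implemented as $\tilde V\,(2\ketbra{0}{0}-I)\,\tilde V^{-1}$, i.e., using $A,A^{-1}$ and $U,U^{-1}$ as black boxes plus the free reflection about~$\ket{0}$. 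I do not expect a genuine obstacle: all the real work — running phase estimation on $A\ket{0}$, cleaning up the tail of bad phase estimates, and applying generalized maximum-finding to extract the largest estimate — is already packaged inside Lemma~\ref{lemma: maxthetafind}, so this lemma is a bookkeeping exercise matching that lemma's hypothesis to the Row~8 advice model.
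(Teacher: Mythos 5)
Your proposal is correct and follows the paper's own route: both simply invoke Lemma~\ref{lemma: maxthetafind} with the given advice unitary~$A$ (the paper takes the $\ket{\phi_j}$ to be empty, you allow arbitrary workspace states, which is an immaterial generality), reading off the $O(1/\gamma)$ advice-unitary count and $O(\log(1/\gamma)/\gamma\delta)$ cost directly. Your extra remarks on the black-box usage of $A$ and on ignoring surplus advice calls are sound but not needed beyond what Lemma~\ref{lemma: maxthetafind} already guarantees.
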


\begin{proof}
Apply Lemma~\ref{lemma: maxthetafind} with the unitary $A$ that maps $\ket{0}$ to $\ket{\alpha}$, with empty states $\ket{\phi_j}$.
\end{proof}

\begin{lemma}[Upper bound for Rows 2, 4]\label{lemma: 24 ub}
There is a quantum algorithm that uses $O(1/\gamma^2)$ copies of the advice state and solves the case in Row 4 (and in Row 2) with cost $O(\log(1/\gamma)/\gamma\delta)$.
\end{lemma}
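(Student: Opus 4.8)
The goal is to derive the Row 4 upper bound (copies of the advice state, unknown eigenbasis) from the Row 8 upper bound (advice unitary, unknown eigenbasis) already established in Lemma~\ref{lemma: 68 ub}. The algorithm of Lemma~\ref{lemma: 68 ub} uses $O(1/\gamma)$ applications of the advice unitary $A$ and its inverse, and by Remark~\ref{rmk:v_usage_reflections} these applications occur only in two ways: to prepare $A\ket{0}=\ket{\alpha}$, and to reflect about the state $\ket{\alpha}$ (i.e., to apply $2\ketbra{\alpha}{\alpha}-I$). So the only thing I need is a way to simulate a reflection about $\ket{\alpha}$, and a single copy of $\ket{\alpha}$ itself, using \emph{copies} of $\ket{\alpha}$ rather than the unitary $A$. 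The first is trivial; for the second I would invoke the ``density-matrix exponentiation'' technique of Lloyd, Mohseni, and Rebentrost~\cite{lmr:pca}: given $n$ copies of a state $\rho=\ketbra{\alpha}{\alpha}$, one can implement the unitary $e^{-i\rho\tau}$ up to error $O(\tau^2/n)$ in trace distance, by repeatedly applying a partial SWAP between the work register and a fresh copy. Taking $\tau=\pi$ gives $e^{-i\pi\rho}=I-2\ketbra{\alpha}{\alpha}$, which is (up to an irrelevant global sign) exactly the reflection about $\ket{\alpha}$, and the error is $O(1/n)$ using $O(n)$ copies.

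The plan is therefore: (i) invoke the Row 8 algorithm $\cal B'$ of Lemma~\ref{lemma: 68 ub}; it makes $O(1/\gamma)$ calls to $A$ / $A^{-1}$, of which $O(1/\gamma)$ are reflections about $\ket{\alpha}$ and $O(1)$ are preparations of $\ket{\alpha}$. (ii) Replace each preparation by consuming one fresh copy of $\ket{\alpha}$. (iii) Replace each reflection about $\ket{\alpha}$ by the LMR simulation, using $m$ fresh copies of $\ket{\alpha}$ per reflection to achieve error $O(1/m)$ per reflection. (iv) Choose $m$ a large enough constant (independent of $\gamma$ and $\delta$) so that the total simulation error, which adds at most linearly over the $O(1/\gamma)$ reflections, is $O(1/(\gamma m))$ — wait, this does not vanish. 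The fix is the standard one: instead of a constant, take $m=\Theta(1/\gamma)$ so that each reflection has error $O(\gamma)$ and the $O(1/\gamma)$ reflections contribute total error $O(1)$; then amplify by a constant number of independent repetitions of $\cal B'$ with majority vote to restore success probability $\geq 2/3$. This uses $O(1/\gamma)\cdot O(1/\gamma)=O(1/\gamma^2)$ copies of $\ket{\alpha}$, and the number of applications of $U$ and $U^{-1}$ is unchanged at $O(\log(1/\gamma)/\gamma\delta)$ since the LMR simulation uses no applications of $U$ at all (it only uses the given copies of $\ket{\alpha}$ and SWAP-type gates on the workspace). Finally, Row 4 follows from Row 8, and Row 2 follows from Row 4 since a known eigenbasis is only a special case — equivalently, the same algorithm works verbatim.

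**Main obstacle.** The delicate point is the error analysis of the LMR reflection-simulation and its interaction with the $O(1/\gamma)$ reflections inside $\cal B'$. I need that: (a) the LMR channel, applied to $m$ copies of the \emph{pure} state $\ket{\alpha}$, approximates the reflection unitary $R_\alpha=I-2\ketbra{\alpha}{\alpha}$ to within $O(1/m)$ in diamond norm (so that it composes well with the rest of $\cal B'$, which is a unitary circuit); (b) errors of these imperfect channels add at most linearly when interleaved with the exact $U$-queries and $U$-independent unitaries of $\cal B'$; and (c) the fresh copies consumed by different reflections are genuinely independent, so there is no issue of ``reusing'' copies — which is why the total copy count multiplies to $O(1/\gamma^2)$ rather than staying at $O(1/\gamma)$. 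Points (b) and (c) are routine (the triangle inequality for diamond norm under composition, plus the fact that $\cal B'$ is an explicit circuit whose calls to the reflection oracle we can enumerate), so the only real content is quoting (a) correctly from~\cite{lmr:pca}. I would also remark, as the paper's ``Gate-complexity'' paragraph already anticipates, that although the \emph{cost} (number of $U$-applications) of this algorithm is $O(\log(1/\gamma)/\gamma\delta)$, its overall gate complexity picks up an additional additive $\tO(1/\gamma^2)$ from the LMR simulations.
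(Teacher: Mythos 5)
Your proposal is essentially the paper's proof: start from the Row~8 algorithm, use Remark~\ref{rmk:v_usage_reflections} to isolate the advice-unitary calls as preparations and reflections, and simulate each reflection via Lloyd--Mohseni--Rebentrost with $\Theta(1/\gamma)$ fresh copies of $\ket{\alpha}$, for $O(1/\gamma^2)$ copies total and unchanged cost. One small correction to your closing step: aiming for total simulation error ``$O(1)$'' and then restoring correctness by majority vote does not work (if the constant is large the success probability can drop below $1/2$, and the output is a real number so majority vote is not even the right primitive); instead, simply take the constant in $m=\Theta(1/\gamma)$ large enough that the per-reflection diamond-norm error is at most $\gamma/(100C)$, so the linearly-accumulated total error is at most $1/100$ and the success probability stays above $2/3$ without any repetition --- which is exactly what the paper does.
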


\begin{proof}
We will build upon the algorithm for Row~8 of Lemma~\ref{lemma: 68 ub}. By Remark~\ref{rmk:v_usage_reflections} and the algorithm in Lemma~\ref{lemma: 68 ub}, its $O(1/\gamma)$ applications of the advice unitary~$A$ and its inverse~$A^{-1}$ are only used there for two purposes: (1) to prepare a copy of the advice state~$A\ket{0}=\ket{\alpha}$, and (2) to reflect about~$\ket{\alpha}$. We now want to replace these applications of $A$ by using copies of the advice state. For (1) this is obvious.
 Assume the algorithm for Row~8 uses (2) at most $C/\gamma$ times, for some constant~$C$.
To implement these reflections, we will invoke the result of Lloyd, Mohseni, and Rebentrost~\cite{lmr:pca} (see also~\cite{kimmelea:hamsim}), who showed that given a number $t>0$ and $O(t^2/\eta)$ copies of a mixed quantum state~$\rho$, one can implement the unitary $e^{i t \rho}$
up to error~$\eta$ (in diamond-norm difference between the intended unitary and the actually-implemented channel).
We will use this result with $\rho=\ketbra{\alpha}{\alpha}$, $t=\pi$, $\eta=\gamma/(100C)$, noting that the implemented unitary
$
e^{i\pi \ketbra{\alpha}{\alpha}}=I-2\ketbra{\alpha}{\alpha}
$
is a reflection about $\ket{\alpha}$ (up to a global minus sign that doesn't matter).

Accordingly, we can implement the $\leq C/\gamma$ reflections used by the algorithm for Row~8 using $O(1/\gamma^2)$ copies of $\ket{\alpha}$, each reflection implemented with error~$\leq\eta$.
Because errors in quantum circuits add at most linearly, the overall error between the algorithm of Row~8 and our simulation of it (using copies of $\ket{\alpha}$) is at most $\eta \cdot C/\gamma\leq 1/100$.
Hence we obtain an algorithm for Row~4 that uses $O(1/\gamma^2)$ copies of~$\ket{\alpha}$ and has the same cost $O(\log(1/\gamma)/\gamma\delta)$ as the algorithm of Row~8.
\end{proof}

\section{Tight bounds for phase estimation with small error probability}

Here we prove our lower bound for quantum algorithms solving phase estimation with precision $\delta$ and error probability at most $\eps$, Theorem~\ref{thm: qpe lb}, which follows from Claims~\ref{claim: reduction to decision problem} and~\ref{claim: decision problem lower bound} below.

\begin{claim}\label{claim: reduction to decision problem}
For all integers $N\geq 2$, all $\eps \in (0, 1/2)$ and $\delta \in (0, 1)$, if there is a cost-$d$ algorithm solving $\QPE_{N, \delta, \eps}$, then there is a cost-$d$ algorithm solving $\dist_{N, \delta,\eps}$.
\end{claim}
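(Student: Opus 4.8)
The plan is to exhibit a cost-preserving reduction from $\dist_{N,\delta,\eps}$ to $\QPE_{N,\delta,\eps}$, exploiting the fact that $\ket{0}$ is an eigenstate of \emph{every} unitary allowed by the promise of $\dist_{N,\delta,\eps}$: indeed $I\ket{0}=e^{i0}\ket{0}$, and $U_\theta\ket{0}=e^{i\theta}\ket{0}$ for each $U_\theta=I-(1-e^{i\theta})\ketbra{0}{0}$. So, given a cost-$d$ algorithm $\A$ for $\QPE_{N,\delta,\eps}$, I would build a cost-$d$ algorithm $\A'$ for $\dist_{N,\delta,\eps}$ as follows. On an input black-box $U$ satisfying the $\dist$ promise, run $\A$ on the state $\ket{0}\ket{0}$; since $\ket{0}$ is an eigenstate of $U$, this is a legal instance of $\QPE_{N,\delta,\eps}$, so $\A$ returns some $\tilde\theta\in[0,2\pi)$. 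Then $\A'$ outputs $1$ if $\tilde\theta$ lies within distance $\delta$ of $0$ in $\R$ modulo $2\pi$, and outputs $0$ otherwise. The only work beyond running $\A$ is preparing $\ket{0}\ket{0}$ and performing a classical comparison of the output, both of which are free and independent of $U$, so $\A'$ uses exactly as many applications of $U$ and $U^{-1}$ as $\A$, i.e. $\A'$ has cost $d$.

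It then remains to check correctness. If $U=I$, the eigenphase associated with $\ket{0}$ is $0$, so with probability at least $1-\eps$ algorithm $\A$ returns $\tilde\theta$ within $\delta$ of $0$ modulo $2\pi$, whence $\A'$ correctly outputs $1$. If instead $U=U_\theta$ with $\theta$ in the excluded range of the $\dist$ promise (so $\theta$ is separated from $0$ by more than $2\delta$ modulo $2\pi$), then with probability at least $1-\eps$ we get $\tilde\theta$ within $\delta$ of $\theta$ modulo $2\pi$; applying the triangle inequality on the circle $\R/2\pi\mathbb{Z}$ to this estimate and to the promised separation of $\theta$ from $0$ shows that $\tilde\theta$ is strictly more than $\delta$ away from $0$ modulo $2\pi$, so $\A'$ correctly outputs $0$. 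In both cases $\A'$ errs with probability at most $\eps$, which is exactly what $\dist_{N,\delta,\eps}$ demands.

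The one point needing care — and the only place the hypothesis on $\delta$ enters (it suffices that $\delta<2\pi/5$, which the theorem's assumption $\delta<1/2$ certainly implies) — is to ensure that the intervals involved, read modulo $2\pi$, do not wrap around and overlap spuriously: one needs the excluded neighborhood of $0$ in the $\dist$ promise to genuinely coincide with the set of phases within the claimed modular distance of $0$, and one needs the $\delta$-ball around $\theta$ to remain disjoint from the $\delta$-ball around $0$. This is a routine interval computation, and it is the only step that is not completely immediate; there is no real obstacle here, since once the common-eigenstate observation is made, the reduction and its analysis are essentially forced.
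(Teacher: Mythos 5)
Your reduction is exactly the one in the paper: run $\A$ on $\ket{0}\ket{0}$, exploit that $\ket{0}$ is a common eigenstate of $I$ and every $U_\theta$, and accept iff the returned estimate is within $\delta$ of $0$ modulo $2\pi$; the correctness argument via disjointness of the $\delta$-balls around $0$ and $\theta$ (using the smallness of $\delta$) also matches. This is the same approach, correctly executed.
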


\begin{proof}
Consider an algorithm $\mathcal{A}$ of cost $d$ that solves $\QPE_{N, \delta, \eps}$. 
We construct below an algorithm $\A'$ of cost $d$ solving $\dist_{N, \delta, \eps}$.
Let $U \in U(N)$ be the input. The following is the description of $\A'$:
\begin{enumerate}
    \item Run $\mathcal{A}$ with inputs $U$ and $\ket{0}$.
    \item Output 1 if the output of $\A$ is in $[-\delta, \delta] \mod 2\pi$, and output 0 otherwise.
\end{enumerate}
Clearly $\A'$ is a valid algorithm, as far as access to input and allowed operations are concerned, since its initial state is $\ket{0}$, it applies $U, U^{-1}$, and some unitaries independent of $U$, and finally performs a two-outcome projective measurement to determine the output bit. The cost of $\A'$ is $d$.

The correctness follows along the same lines as the proofs in Section~\ref{sec: lower}. We prove correctness here for completeness. First note that the state $\ket{0}$ is an eigenstate of all $U \in \cbra{I} \cup \cbra{U_\theta : \theta \notin [-3\delta, 3\delta] \mod 2\pi}$.
When $U = I$, the correctness of $\A$ guarantees that with probability at least $1 - \eps$, the value output by $\A$ is in $[-\delta, \delta]\mod 2\pi$. When $U = U_\theta$, the correctness of $\A$ guarantees that with probability at least $1 - \eps$, the value output by $\A$ is in $[\theta-\delta, \theta + \delta]\mod 2\pi$.
For $\theta \notin [-3\delta, 3\delta] \mod 2\pi$ we have $[-\delta, \delta]\mod 2\pi \cap [\theta-\delta, \theta + \delta]\mod 2\pi = \emptyset$ since $\delta < 1 < 2\pi/5$, and hence $\mathcal{A}'$ solves $\dist_{N, \delta, \eps}$.
\end{proof}

We next show a lower bound for the cost of algorithms computing $\dist_{N, \delta, \eps}$.

\begin{claim}\label{claim: decision problem lower bound}
For all integers $N \geq 2$, all $\eps \in (0, 1/2)$ and $\delta \in (0, 1)$, every algorithm for $\dist_{N, \delta, \eps}$ has cost $\Omega\rbra{\frac{1}{\delta}\log\frac{1}{\eps}}$.
\end{claim}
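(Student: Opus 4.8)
The plan is to use the polynomial method with trigonometric polynomials, as hinted in the introduction. Let $\mathcal{A}$ be a cost-$C$ algorithm solving $\dist_{N,\delta,\eps}$. The input is a unitary $U = U_\theta = I - (1-e^{i\theta})\ketbra{0}{0}$ for $\theta$ ranging over $\mathbb{R}$ mod $2\pi$ (including $\theta = 0$, which gives $U = I$). First I would observe that $U_\theta$ acts nontrivially only on the one-dimensional subspace spanned by $\ket{0}$, so the whole run of $\mathcal{A}$ takes place in a fixed, input-independent subspace, and each application of $U_\theta$ or $U_\theta^{-1} = U_{-\theta}$ multiplies the $\ket{0}$-component of the current state by $e^{i\theta}$ or $e^{-i\theta}$. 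Consequently the final state of $\mathcal{A}$ has amplitudes that are polynomials in $e^{i\theta}$ and $e^{-i\theta}$ of total degree at most $C$ (counting $U_\theta$ and $U_\theta^{-1}$ together), and the acceptance probability $p(\theta) = \Pr[\mathcal{A}\text{ outputs }1]$ is therefore a real-valued trigonometric polynomial in $\theta$ of degree at most $2C$.

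Next I would translate the correctness requirement of $\dist_{N,\delta,\eps}$ into growth/value constraints on $p$. Since $\ket{0}$ is an eigenstate of every $U_\theta$, the algorithm must satisfy $p(0) \geq 1-\eps$ (the $U = I$ case, i.e. $\theta = 0$), and $p(\theta) \leq \eps$ for all $\theta \notin [-\delta,\delta] \bmod 2\pi$ (the $U = U_\theta$ case, which is in the promise exactly when $\theta \notin [-\delta,\delta]$). Now I want to apply Theorem~\ref{thm: trig magic} to an appropriately rescaled version of $p$. Consider $t(\theta) = \frac{2}{1-2\eps}\bigl(p(\theta) - \tfrac{1}{2}\bigr)$ or some similar affine normalization chosen so that $|t(\theta)| \leq 1$ precisely on the region where $p(\theta)$ is small; more carefully, I would set things up so that $|t(\theta)| \leq 1$ holds on $\{\theta \in [-\pi,\pi) : \theta \notin [-\delta,\delta]\}$, a set of Lebesgue measure $2\pi - 2\delta$, while $|t(0)|$ is forced to be large — of order $1/\eps$ — because $p(0) \geq 1-\eps$ sits far above the small band. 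The degree of $t$ is still at most $2C$.

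Applying Theorem~\ref{thm: trig magic} with $n = 2C$ and $s = 2\delta$ (valid since $\delta < 1/2 < \pi/4$, so $s \leq \pi/2$) gives $\sup_x |t(x)| \leq \exp(4 \cdot 2C \cdot 2\delta) = \exp(16C\delta)$. Combining this with the lower bound $|t(0)| = \Omega(1/\eps)$ yields $\exp(16C\delta) = \Omega(1/\eps)$, hence $C = \Omega\bigl(\frac{1}{\delta}\log\frac{1}{\eps}\bigr)$, which is the claimed bound. The main obstacle I anticipate is the bookkeeping in the second step: arranging the affine rescaling so that the "small" region (measure $\geq 2\pi - s$ with $s = O(\delta)$) and the "large" value at $\theta = 0$ are simultaneously encoded, while keeping $s \leq \pi/2$ and the blow-up constant in $|t(0)| = \Omega(1/\eps)$ genuinely bounded away from needing $\eps$ to be tiny. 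A secondary point requiring care is the degree count — making sure that interleaved applications of $U_\theta$ and $U_\theta^{-1}$, together with the free $U$-independent unitaries, really produce a trigonometric polynomial of degree at most $2C$ rather than something larger — but this is a standard argument once one tracks the $e^{\pm i\theta}$ factors picked up by the $\ket{0}$-component.
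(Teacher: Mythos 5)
Your approach is the same as the paper's: show the acceptance probability $p(\theta)$ is a degree-$2C$ real trigonometric polynomial, translate correctness into ``$p$ small on a set of measure $\geq 2\pi - O(\delta)$ but $p(0)\geq 1-\eps$'', and then invoke Theorem~\ref{thm: trig magic}. The degree bookkeeping you describe is fine (the paper establishes it by a short induction, Claim~\ref{claim: acc prob trig poly}, but the point is identical). The minor discrepancy $[-\delta,\delta]$ vs.\ $[-3\delta,3\delta]$ in the promised interval only changes $s$ by a constant factor and is harmless.

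The one genuine gap — which you yourself flag as ``the main obstacle'' — is the normalization step, and your tentative formula does not work. With $t(\theta) = \frac{2}{1-2\eps}\bigl(p(\theta) - \tfrac12\bigr)$, the bad region $p(\theta)\in[0,\eps]$ is mapped to $t(\theta)\in\bigl[-\tfrac{1}{1-2\eps},\,-1\bigr]$, so $|t|\geq 1$ rather than $|t|\leq 1$; and $t(0)\geq 1$ only, not $\Omega(1/\eps)$. This affine map symmetrizes about $1/2$, which is the wrong transformation: you do not want to pin the endpoints of $[\eps,1-\eps]$ to $\pm 1$, you want the small band $[0,\eps]$ to sit inside $[-1,1]$ while the value near $1$ blows up. The clean fix (which the paper uses) is simply to scale, not shift: set $q = p/\eps$. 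Then $q(\theta)\in[0,1]$ on the bad region, while $q(0) \geq (1-\eps)/\eps > 1/(2\eps)$ using $\eps < 1/2$. Feeding $q$ into Theorem~\ref{thm: trig magic} with $n = 2C$ and $s = O(\delta)$ then gives $1/(2\eps) \leq \exp(O(C\delta))$, hence $C = \Omega\bigl(\tfrac{1}{\delta}\log\tfrac{1}{\eps}\bigr)$, exactly as you anticipated. So the idea is right; only the specific rescaling needed to be the multiplicative one.
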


In order to prove Claim~\ref{claim: decision problem lower bound}, we first show that amplitudes of basis states in low-cost algorithms that run on $U_\theta$ are low-degree trigonometric polynomials in $\theta$. This is analogous to the fact that amplitudes of basis states in query algorithms for Boolean functions are low-degree (algebraic) polynomials in the input variables~\cite[Lemma~4.1]{BBCMW01}, and our proof is inspired by theirs.

\begin{claim}\label{claim: acc prob trig poly}
Let $t > 0$ be a positive integer and let $\theta \in [0, 2\pi]$. Consider a quantum circuit that has starting state $\ket{0}$, uses an arbitrary number of $\theta$-independent unitaries, uses $t$ applications of controlled-$U_\theta$ and controlled-$U_\theta^{-1}$ in total, and performs no intermediate measurements. Then the amplitudes of basis states before the final measurement are degree-$t$ trigonometric polynomials in $\theta$.
\end{claim}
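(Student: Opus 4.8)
The plan is to induct on the number of applications of controlled-$U_\theta$ and controlled-$U_\theta^{-1}$, tracking the trigonometric degree of every amplitude of the state vector. First I would set up the claim in the natural stronger form needed for induction: after $k$ applications of controlled-$U_\theta$ or controlled-$U_\theta^{-1}$ (interleaved with arbitrary $\theta$-independent unitaries), every amplitude $\langle x \mid \psi_k\rangle$ of the current state $\ket{\psi_k}$ is a trigonometric polynomial in $\theta$ of degree at most $k$, i.e.\ of the form $\sum_{j=-k}^{k} a_j(x) e^{ij\theta}$. The base case $k=0$ is immediate: the state $\ket{\psi_0}$ is obtained from $\ket{0}$ by a $\theta$-independent unitary, so each amplitude is a constant, which is a degree-$0$ trigonometric polynomial.

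For the inductive step I would consider the two kinds of operations. A $\theta$-independent unitary acts by a fixed matrix on the amplitude vector, so it produces $\C$-linear combinations of the existing amplitudes; a linear combination of degree-$\leq k$ trigonometric polynomials is again a degree-$\leq k$ trigonometric polynomial, so the degree bound is preserved (and this step can be folded in freely, since we only count the controlled-$U_\theta$ applications). The key step is the application of controlled-$U_\theta$ (or its inverse). Recall $U_\theta = I - (1 - e^{i\theta})\ketbra{0}{0}$, so $U_\theta$ is diagonal with entries in $\{1, e^{i\theta}\}$, and $U_\theta^{-1}$ is diagonal with entries in $\{1, e^{-i\theta}\}$; the controlled versions, acting on the control-plus-target registers, are still diagonal in the computational basis with entries in $\{1, e^{i\theta}\}$ (resp.\ $\{1, e^{-i\theta}\}$). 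Therefore applying controlled-$U_\theta$ multiplies each amplitude $\langle x \mid \psi_k\rangle$ either by $1$ or by $e^{i\theta}$ (depending only on $x$), and similarly controlled-$U_\theta^{-1}$ multiplies by $1$ or $e^{-i\theta}$. Multiplying a degree-$\leq k$ trigonometric polynomial $\sum_{j=-k}^{k} a_j e^{ij\theta}$ by $e^{\pm i\theta}$ yields $\sum_{j=-(k+1)}^{k+1} a_j' e^{ij\theta}$, a degree-$\leq (k+1)$ trigonometric polynomial. Hence after the $(k+1)$-st controlled-$U_\theta^{(\pm 1)}$ application, all amplitudes have trigonometric degree at most $k+1$, completing the induction.

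Applying this with $k = t$ gives that the amplitudes of the state just before the final measurement are degree-$t$ trigonometric polynomials in $\theta$, as claimed. I do not expect a genuine obstacle here; the only point requiring a little care is the bookkeeping that the workspace/control registers and the interleaved $\theta$-independent unitaries do not inflate the degree — which is handled precisely by noting that the only degree-raising operations are the $t$ diagonal phase multiplications by $e^{\pm i\theta}$, and everything else is $\C$-linear in the amplitudes. (This mirrors the classical polynomial-method argument of Beals et al.~\cite[Lemma~4.1]{BBCMW01}, with ordinary monomials replaced by the characters $e^{ij\theta}$.)
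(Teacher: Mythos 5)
Your proof is correct and follows essentially the same inductive approach as the paper: both arguments observe that controlled-$U_\theta^{\pm 1}$ is diagonal in the computational basis with entries in $\{1,e^{\pm i\theta}\}$, so it multiplies each amplitude by a degree-$1$ trigonometric monomial, raising degree by at most one per application, while $\theta$-independent unitaries take $\C$-linear combinations and preserve degree. The paper simply writes out the diagonal action as an explicit case split on the control bit and the target basis state.
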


\begin{proof}
We prove this by induction on $t$.
The claim is clearly true when $t = 0$ since all amplitudes are constants in this case.
For the inductive step, suppose the claim is true for $t = d$. Let $\ket{\psi_d}$ denote the state of the circuit just before the application of the $(d+1)$th application of $U_\theta$ (the argument for $U_\theta^{-1}$ is similar, and we skip it).
By the inductive hypothesis, we have
\[
\ket{\psi_d} = \sum_w\sum_{b \in \zone}\sum_{j = 0}^{N-1}p_{j,b,w}(\theta)\ket{j}\ket{b}\ket{w},
\]
where the first register is where $U_\theta$ and $U_\theta^{-1}$ act, the second register is the control qubit, and the last register represents the workspace (i.e., $U_\theta$ and $U_\theta^{-1}$ do not act on this register), and each $p_{j, b, w}$ is a trigonometric polynomial of degree at most $d$ in $\theta$.
For a basis state $\ket{j}\ket{b}\ket{w}$, we have
\[
U_\theta \ket{j}\ket{b}\ket{w} = \begin{cases}
e^{i\theta}\ket{0}\ket{b}\ket{w} & \mbox{if }j = 0 \mbox{ and }b = 1\\
\hspace*{1.2em}\ket{j}\ket{b}\ket{w} & \text{otherwise}.
\end{cases}
\]
In both cases, the amplitudes of the basis states after the application of $U_\theta$ are degree-$(d + 1)$ trigonometric polynomials in $\theta$. 
After the last application of $U_\theta$ the algorithm will apply an input-independent unitary. The amplitudes after that unitary are linear combinations of the amplitudes before, which won't increase degree. 
This concludes the inductive step, and hence the theorem.
\end{proof}

\begin{proof}[Proof of Claim~\ref{claim: decision problem lower bound}]

Consider a cost-$t$ algorithm $\A'$ solving $\dist_{N, \delta, \eps}$.
Claim~\ref{claim: acc prob trig poly} implies that on input $U_\theta$, the amplitudes of the basis states before the final  measurement are degree-$t$ trigonometric polynomials in $\theta$. The acceptance-probability polynomial $p : \R \to \R$ given by
$
p(\theta) : = \Pr[\mathcal{A}'(U_\theta) = 1]
$
is a degree-$2t$ trigonometric polynomial, because it is the sum of squares of moduli of certain amplitudes, and each of these squares is a degree-$2t$ trigonometric polynomial.
The correctness of the algorithm ensures that $p(0)\in[1 - \eps,1]$ and $p(\theta) \in [0, \eps]$ for all $\theta \notin [-3\delta, 3\delta] \mod 2\pi$. See Figure~\ref{fig: successprobfig} for a visual depiction of the behaviour of $p$ for $\theta \in [-\pi, \pi)$.

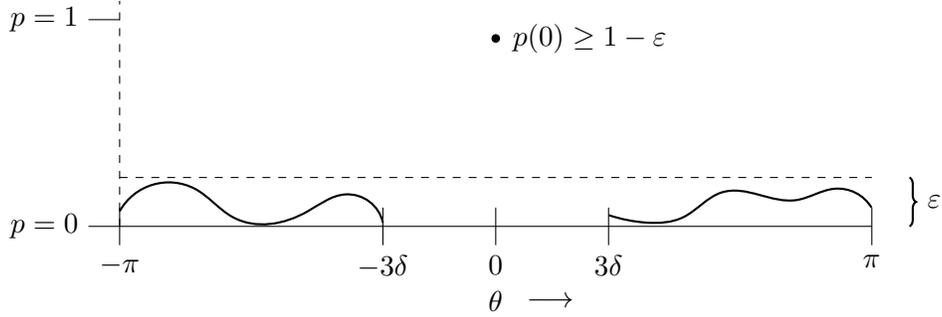
\begin{figure}[hbt]
\begin{center}
    \begin{tikzpicture}[scale=0.5]
    \draw (-10,0)-- (10,0); 
    \node (xaxislabel) at (0, -2) {$\theta$};
    \node (p0) at (-12,0) {$p = 0$};
    \node (p1) at (-12,5.5) {$p = 1$};
    \draw[dashed] (-10,0) -- (-10,6);
    \draw (p0) -- (-10,0);
    \draw (p1) -- (-10,5.5);
    \node[right=.5mm of xaxislabel] (rightarrow) {$\longrightarrow$};
    \draw (0,.5) -- (0,-.5) node[below] {$0$};
    \draw (-10,.5) -- (-10,-.5) node[below] {$-\pi$};
    \draw (10,.5) -- (10,-.5) node[below] {$\pi$};
    \draw (-3,.5) -- (-3,-.5) node[below] {$-3\delta$};
    \draw (3,.5) -- (3,-.5) node[below] {$3\delta$};
    \draw[fill=black] (0,5) circle (0.1);
    \draw [thick] (-10 ,.4) to [ curve through ={(-8,1)  . . (-7,.25) . . (-5,.4)  . . (-4,.85)}] (-3,.1);
    \draw [thick] (3 ,.3) to [ curve through ={(4,.1)  . . (5,.25) . . (6,.9)  . . (8,.7) . . (9,1)}] (9.99,.5);
    \draw[thick, decorate,decoration={brace}] (11,1.3) --
    node[right=1mm]{$\eps$} (11,0);
    \draw [dashed] (-10,1.3) -- (10,1.3);
    \node (0label) at (2.5,5) {$p(0) \geq 1 - \eps$};
\end{tikzpicture}
\end{center}
\caption{Acceptance probability $p$ of $\A'$ as a function of $\theta$ in the proof of Claim~\ref{claim: decision problem lower bound}}
\label{fig: successprobfig}
\end{figure}

Scaling by a global factor of $1/\eps$, we obtain a trigonometric polynomial $q$ of degree $2t$ satisfying:
\begin{itemize}
    \item $q(0) \geq (1 - \eps)/\eps > 1/(2\eps)$, and
    \item $q(\theta) \in [0, 1]$ for all $\theta \in [-\pi, \pi) \setminus [-3\delta, 3\delta]$.
\end{itemize}
Thus, Theorem~\ref{thm: trig magic} is applicable with $s = 6\delta$ and $n = 2t$, which implies
$
1/(2\eps) \leq \sup_{x \in \mathbb{R}}|q(x)| \leq \exp(48t\delta).
$
By taking logarithms and rearranging we get $t = \Omega\rbra{\frac{1}{\delta} \log\frac{1}{\eps}}$,
proving the theorem.
\end{proof}

\section{Conclusion}

In this paper we considered several natural variants of the fundamental phase estimation problem in quantum computing, and proved essentially tight bounds on their cost in each setting. As an immediate application of one of our bounds, we resolved an open question of~\cite[Section~2]{SY23}.

We mention some interesting questions in the first variant of phase estimation we considered, where an algorithm is given a number of copies of advice states/unitaries instead of black-box access to a perfect eigenstate as in the basic phase estimation setup. First, are the logarithmic overheads for the cost in the input dimension~$N$ and the inverse of the overlap~$\gamma$ in our upper bounds (see Table~\ref{table: bounds}) necessary, or can we give tighter upper bounds? 
Second, what is the optimal gate-complexity for rows~2 and~4?
Third, can we show the $\log(1/\eps)$-dependence on the error probability also in the advice-guided case, like we did for basic phase estimation (Theorem~\ref{thm: qpe lb})?

\paragraph{Acknowledgements.} We thank Jordi Weggemans for useful comments and for a pointer to~\cite{linlin&tong:groundstateprep}, and Han-Hsuan Lin for pointing out an error in an earlier version which we corrected here (see Footnote~\ref{note:hannote}).

\bibliographystyle{alphaurl}
\bibliography{bibo}

\newcommand{\etalchar}[1]{$^{#1}$}
\begin{thebibliography}{AGGW20}

\bibitem[AGGW20]{AGGW20}
{Joran van} Apeldoorn, Andr{\'{a}}s Gily{\'{e}}n, Sander Gribling, and {Ronald de} Wolf.
\newblock Quantum {SDP}-solvers: Better upper and lower bounds.
\newblock {\em Quantum}, 4:230, 2020.
\newblock arXiv:1705.01843. Earlier version in FOCS'17.
\newblock \href {https://doi.org/10.22331/q-2020-02-14-230} {\path{doi:10.22331/q-2020-02-14-230}}.

\bibitem[Amb02]{Amb02}
Andris Ambainis.
\newblock Quantum lower bounds by quantum arguments.
\newblock {\em Journal of Computer and System Sciences}, 64(4):750--767, 2002.
\newblock Earlier version in STOC'00.
\newblock \href {https://doi.org/10.1006/jcss.2002.1826} {\path{doi:10.1006/jcss.2002.1826}}.

\bibitem[BBC{\etalchar{+}}01]{BBCMW01}
Robert Beals, Harry Buhrman, Richard Cleve, Michele Mosca, and {Ronald de} Wolf.
\newblock Quantum lower bounds by polynomials.
\newblock {\em Journal of the {ACM}}, 48(4):778--797, 2001.
\newblock quant-ph/9802049. Earlier version in FOCS'98.
\newblock \href {https://doi.org/10.1145/502090.502097} {\path{doi:10.1145/502090.502097}}.

\bibitem[BCWZ99]{BCWZ99}
Harry Buhrman, Richard Cleve, {Ronald de} Wolf, and Christof Zalka.
\newblock Bounds for small-error and zero-error quantum algorithms.
\newblock In {\em Proceedings of 40th IEEE FOCS}, pages 358--368, 1999.
\newblock arXiv:cs/9904019.
\newblock \href {https://doi.org/10.1109/SFFCS.1999.814607} {\path{doi:10.1109/SFFCS.1999.814607}}.

\bibitem[BE95]{BE95}
Peter Borwein and Tam{\'a}s Erd{\'e}lyi.
\newblock {\em Polynomials and polynomial inequalities}, volume 161.
\newblock Springer Science \& Business Media, 1995.
\newblock \href {https://doi.org/10.1007/978-1-4612-0793-1} {\path{doi:10.1007/978-1-4612-0793-1}}.

\bibitem[Bes05]{Bes05}
Arvid~J. Bessen.
\newblock Lower bound for quantum phase estimation.
\newblock {\em Physical Review A}, 71(4):042313, 2005.
\newblock \href {https://doi.org/10.1103/PhysRevA.71.042313} {\path{doi:10.1103/PhysRevA.71.042313}}.

\bibitem[BHMT02]{bhmt:countingj}
Gilles Brassard, Peter H{\o}yer, Michele Mosca, and Alain Tapp.
\newblock Quantum amplitude amplification and estimation.
\newblock In {\em Quantum Computation and Quantum Information: A Millennium Volume}, volume 305 of {\em AMS Contemporary Mathematics Series}, pages 53--74. 2002.
\newblock \href {https://arxiv.org/abs/quant-ph/0005055} {\path{arXiv:quant-ph/0005055}}.

\bibitem[CEMM98]{cemm:revisited}
Richard Cleve, Artur Ekert, Chiara Macchiavello, and Michele Mosca.
\newblock Quantum algorithms revisited.
\newblock In {\em Proceedings of the Royal Society of London}, volume A454, pages 339--354, 1998.
\newblock quant-ph/9708016.
\newblock \href {https://doi.org/10.1098/rspa.1998.0164} {\path{doi:10.1098/rspa.1998.0164}}.

\bibitem[CFG{\etalchar{+}}23]{CFG+23}
Chris Cade, Marten Folkertsma, Sevag Gharibian, Ryu Hayakawa, Fran{\c{c}}ois~Le Gall, Tomoyuki Morimae, and Jordi Weggemans.
\newblock Improved hardness results for the guided local hamiltonian problem.
\newblock In {\em Proceedings of 50th ICALP}, pages 32:1--32:19, 2023.
\newblock arXiv:2207.10250.
\newblock \href {https://doi.org/10.4230/LIPICS.ICALP.2023.32} {\path{doi:10.4230/LIPICS.ICALP.2023.32}}.

\bibitem[CW23]{chen&wolf:linreg}
Yanlin Chen and {Ronald de} Wolf.
\newblock Quantum algorithms and lower bounds for linear regression with norm constraints.
\newblock In {\em Proceedings of 50th ICALP}, volume 261 of {\em Leibniz International Proceedings in Informatics (LIPIcs)}, pages 38:1--38:21, 2023.
\newblock arXiv:2110.13086.
\newblock \href {https://doi.org/10.4230/LIPIcs.ICALP.2023.38} {\path{doi:10.4230/LIPIcs.ICALP.2023.38}}.

\bibitem[GG23]{GL23}
Sevag Gharibian and Fran{\c{c}}ois~Le Gall.
\newblock Dequantizing the quantum singular value transformation: Hardness and applications to quantum chemistry and the quantum {PCP} conjecture.
\newblock {\em {SIAM} Journal on Computing}, 52(4):1009--1038, 2023.
\newblock Earlier version in STOC'22.
\newblock \href {https://doi.org/10.1137/22M1513721} {\path{doi:10.1137/22M1513721}}.

\bibitem[Gil19]{gilyen:thesis}
Andr{\'{a}}s Gily{\'e}n.
\newblock {\em Quantum Singular Value Transformation \&\ Its Algorithmic Applications}.
\newblock PhD thesis, University of Amsterdam, 2019.
\newblock URL: \url{https://pure.uva.nl/ws/files/35292358/Thesis.pdf}.

\bibitem[Gro96]{grover:search}
Lov~K. Grover.
\newblock A fast quantum mechanical algorithm for database search.
\newblock In {\em Proceedings of 28th ACM STOC}, pages 212--219, 1996.
\newblock quant-ph/9605043.
\newblock \href {https://doi.org/10.1145/237814.237866} {\path{doi:10.1145/237814.237866}}.

\bibitem[GSLW19]{gilyenea:svtrans}
Andr{\'{a}}s Gily{\'e}n, Yuan Su, {Guang Hao} Low, and Nathan Wiebe.
\newblock Quantum singular value transformation and beyond: exponential improvements for quantum matrix arithmetics.
\newblock In {\em Proceedings of 51st ACM STOC}, pages 193--204, 2019.
\newblock arXiv:1806.01838.
\newblock \href {https://doi.org/10.1145/3313276.3316366} {\path{doi:10.1145/3313276.3316366}}.

\bibitem[GTC19]{GTC:groundstateprep}
Yimin Ge, Jordi Tura, and {J. Ignacio} Cirac.
\newblock Faster ground state preparation and high-precision ground energy estimation with fewer qubits.
\newblock {\em Journal of Mathematical Physics}, 60(2):022202, 2019.
\newblock arXiv:1712.03193.
\newblock \href {https://doi.org/10.1063/1.5027484} {\path{doi:10.1063/1.5027484}}.

\bibitem[HHL09]{hhl:lineq}
Aram~W. Harrow, Avinatan Hassidim, and Seth Lloyd.
\newblock Quantum algorithm for solving linear systems of equations.
\newblock {\em Physical Review Letters}, 103(15):150502, 2009.
\newblock arXiv:0811.3171.
\newblock URL: \url{https://doi.org/10.1103/PhysRevLett.103.150502}.

\bibitem[Kit95]{Kit95}
A.~Yu. Kitaev.
\newblock Quantum measurements and the abelian stabilizer problem, 1995.
\newblock \href {https://arxiv.org/abs/quant-ph/9511026} {\path{arXiv:quant-ph/9511026}}.

\bibitem[KLL{\etalchar{+}}17]{kimmelea:hamsim}
Shelby Kimmel, Cedric Yen-Yu Lin, {Guang Hao} Low, Maris Ozols, and {Theodore J.} Yoder.
\newblock Hamiltonian simulation with optimal sample complexity.
\newblock {\em npj Quantum Information}, 3(13), 2017.
\newblock arXiv:1608.00281.
\newblock \href {https://doi.org/10.1038/s41534-017-0013-7} {\path{doi:10.1038/s41534-017-0013-7}}.

\bibitem[Lin23]{Lin23}
Yao-Ting Lin.
\newblock A note on quantum phase estimation.
\newblock 2023.
\newblock \href {https://arxiv.org/abs/2304.02241} {\path{arXiv:2304.02241}}.

\bibitem[LMR13]{lmr:pca}
Seth Lloyd, Masoud Mohseni, and Patrick Rebentrost.
\newblock Quantum principal component analysis.
\newblock {\em Nature Physics}, 10:631--633, 2013.
\newblock arXiv:1307.0401.
\newblock \href {https://doi.org/10.1038/nphys3029} {\path{doi:10.1038/nphys3029}}.

\bibitem[LT20]{linlin&tong:groundstateprep}
{Lin Lin} and Yu~Tong.
\newblock Near-optimal ground state preparation.
\newblock {\em Quantum}, 4(372), 2020.
\newblock arXiv:2002.12508.
\newblock \href {https://doi.org/10.22331/q-2020-12-14-372} {\path{doi:10.22331/q-2020-12-14-372}}.

\bibitem[LT22]{linlin&tong:grounstateearly}
{Lin Lin} and Yu~Tong.
\newblock Heisenberg-limited ground state energy estimation for early fault-tolerant quantum computers.
\newblock {\em PRX Quantum}, 3(010318), 2022.
\newblock arXiv:2102.11340.
\newblock \href {https://doi.org/10.1103/PRXQuantum.3.010318} {\path{doi:10.1103/PRXQuantum.3.010318}}.

\bibitem[LW22]{linden&wolf:avQFT}
Noah Linden and {Ronald de} Wolf.
\newblock Average-case verification of the quantum {F}ourier transform enables worst-case phase estimation.
\newblock {\em Quantum}, 6(872), 2022.
\newblock arXiv:2109.10215.
\newblock \href {https://doi.org/10.22331/q-2022-12-07-872} {\path{doi:10.22331/q-2022-12-07-872}}.

\bibitem[NW99]{NW99}
Ashwin Nayak and Felix Wu.
\newblock The quantum query complexity of approximating the median and related statistics.
\newblock In {\em Proceedings of 31st ACM STOC}, pages 384--393, 1999.
\newblock \href {https://doi.org/10.1145/301250.301349} {\path{doi:10.1145/301250.301349}}.

\bibitem[PMS{\etalchar{+}}14]{VQE}
Alberto Peruzzo, Jerrad McClean, Peter Shadbolt, {Man-Hong} Yung, {Xiao-Qi} Zhou, Peter Love, Al{\'a}n Aspuru-Guzik, and Jeremy O’Brien.
\newblock A variational eigenvalue solver on a photonic quantum processor.
\newblock {\em Nature Communications}, 3:24, 2014.
\newblock arXiv:1304.3061.
\newblock \href {https://doi.org/10.1038/ncomms5213} {\path{doi:10.1038/ncomms5213}}.

\bibitem[PW09]{poulin2009GibbsSamplingAndEval}
David Poulin and Pawel Wocjan.
\newblock Sampling from the thermal quantum {G}ibbs state and evaluating partition functions with a quantum computer.
\newblock {\em Physical Review Letters}, 103(22):220502, 2009.
\newblock arXiv:0905.2199.
\newblock \href {https://doi.org/10.1103/PhysRevLett.103.220502} {\path{doi:10.1103/PhysRevLett.103.220502}}.

\bibitem[Ral21]{rall:phasest}
Patrick Rall.
\newblock Faster coherent quantum algorithms for phase, energy, and amplitude estimation.
\newblock {\em Quantum}, 5(566), 2021.
\newblock arXiv:2103.09717.
\newblock \href {https://doi.org/10.22331/q-2021-10-19-566} {\path{doi:10.22331/q-2021-10-19-566}}.

\bibitem[Sho97]{shor:factoring}
Peter~W. Shor.
\newblock Polynomial-time algorithms for prime factorization and discrete logarithms on a quantum computer.
\newblock {\em SIAM Journal on Computing}, 26(5):1484--1509, 1997.
\newblock Earlier version in FOCS'94. quant-ph/9508027.
\newblock \href {https://doi.org/10.1137/S0097539795293172} {\path{doi:10.1137/S0097539795293172}}.

\bibitem[SY23]{SY23}
Adrian She and Henry Yuen.
\newblock Unitary property testing lower bounds by polynomials.
\newblock In {\em Proceedings of 14th ITCS}, volume 251, pages 96:1--96:17, 2023.
\newblock arXiv:2210.05885.
\newblock \href {https://doi.org/10.4230/LIPIcs.ITCS.2023.96} {\path{doi:10.4230/LIPIcs.ITCS.2023.96}}.

\bibitem[WBC22]{WBC:randomizedphasest}
Kianna Wan, Mario Berta, and Earl~T. Campbell.
\newblock A randomized quantum algorithm for statistical phase estimation.
\newblock {\em Physical Review Letters}, 129(030503), 2022.
\newblock arXiv:2110.12071.
\newblock \href {https://doi.org/10.1103/PhysRevLett.129.030503} {\path{doi:10.1103/PhysRevLett.129.030503}}.

\bibitem[WFC24]{WFC24}
Jordi Weggemans, Marten Folkertsma, and Chris Cade.
\newblock Guidable {L}ocal {H}amiltonian problems with implications to heuristic ansatz state preparation and the quantum {PCP} conjecture.
\newblock In {\em Proceedings of 19th TQC}, pages 10:1--10:24, 2024.
\newblock arXiv:2302.11578.
\newblock \href {https://doi.org/10.4230/LIPIcs.TQC.2024.10} {\path{doi:10.4230/LIPIcs.TQC.2024.10}}.

\bibitem[Wol08]{wolf:degreesymmf}
Ronald~{de} Wolf.
\newblock A note on quantum algorithms and the minimal degree of {$\varepsilon$}-error polynomials for symmetric functions.
\newblock {\em Quantum Information and Computation}, 8(10):943--950, 2008.
\newblock \href {https://arxiv.org/abs/0802.1816} {\path{arXiv:0802.1816}}.

\bibitem[Wol19]{lecturenotes}
Ronald~{de} Wolf.
\newblock Quantum computing: Lecture notes, 2019.
\newblock arXiv:1907.09415, version~5.
\newblock URL: \url{http://arxiv.org/abs/1907.09415}.

\end{thebibliography}

\newpage
\appendix

\section{Proof of Theorem~\ref{thm: gndeltat lower bound}}\label{app: adversary proof}

In this section we prove Theorem~\ref{thm: gndeltat lower bound} by a simple modification of the adversary method~\cite{Amb02}.

\begin{proof}[Proof of Theorem~\ref{thm: gndeltat lower bound}]
    Let $\mathcal{A}$ be a cost-$C$ algorithm solving $\for_{N,\delta,t}$.
Define $U_j = I - (1 - e^{i\delta})\ketbra{j}{j}$ for $j \in \cbra{1,\dots,N-1}$, and set $U=I$ if $j = 0$.
We will use the following advice states: $\ket{\alpha_0}=\ket{0}$ and $\ket{\alpha_j}=\gamma\ket{j} + \sqrt{1 - \gamma^2}\ket{0}$
for $j \in \cbra{1,\dots,N-1}$.
For $j \in \cbra{0,1,\dots,N-1}$ let $\ket{\psi_j^0}=\ket{\alpha_j}^{\otimes t}\ket{0}$ be the initial state, which includes $t$ copies of the advice state, and for $T\in\{1,\ldots,C\}$ let $\ket{\psi_j^T}$ be the state of the algorithm (on input $U = U_j$ with initial state $\ket{\psi_j^0}$) just before the $T$th application of $U$ or its inverse. 

Define the following  progress measure $P$ as a function of the timestep~$T$:
$$
P(T) = \sum_{j = 1}^{N-1} |\braket{\psi_0^T}{\psi_j^T}|.
$$
We have 
\begin{equation}\label{eqn: initial measure}
P(0) = \sum_{j = 1}^{N - 1}|\braket{\psi_0^0}{\psi_j^0}| =\sum_{j = 1}^{N - 1}(1 - \gamma^2)^{t/2} = (N-1)(1 - \gamma^2)^{t/2}.
\end{equation}
Since the output of $\mathcal{A}$ is different with high probability for $U = I$ on the one hand or for one of the other $U_j$ on the other hand, it is easy to show that $|\braket{\psi_0^C}{\psi_j^C}|$ is bounded below~1, say $\leq 0.99$. Thus,
\begin{equation}\label{eqn: final measure}
P(C) \leq 0.99 (N-1).
\end{equation}
The assumption $t=o(1/\gamma^2)$ implies $(1 - \gamma^2)^{t/2} - 0.99=\Omega(1)$, so we see that $P(C)$ is significantly smaller than $P(0)$. We now want to upper bound how much $P(T)$ can shrink in one step, in order to lower bound the number of steps.

For all $j\in\{0,1,\ldots,N-1\}$, define real amplitudes $\alpha_{jk}$ and normalized workspace states $\ket{w_{jk}}$ (depending on $T$, but we suppress this dependence in our notation) such that
\begin{align*}
\ket{\psi_j^T} & =\sum_{k = 0}^{N - 1} \alpha_{jk}\ket{k}\ket{w_{jk}}.
\end{align*}
Then
\begin{equation}\label{eqn:ipbefore}
    \braket{\psi_0^T}{\psi_j^T}=\sum_{k=0}^{N-1}\alpha_{0k}\alpha_{jk}\braket{w_{0k}}{w_{jk}} \mbox{~~~for all }j\in\{1,\ldots,N-1\}.
\end{equation}
We also have, after applying $U$:
\begin{align*}
I\ket{\psi_0^T} & = \hspace*{2.3em}\sum_{k = 0}^{N - 1}\hspace*{2.4em} \alpha_{0k}\ket{k}\ket{w_{0k}},\\
U_j\ket{\psi_j^T} & = \sum_{k\in\{0,\ldots,N-1\}\backslash\{j\}} \alpha_{jk}\ket{k}\ket{w_{jk}}~~+~~ e^{i\delta}\alpha_{jj}\ket{j}\ket{w_{jj}} \mbox{~~~for all }j\in\{1,\ldots,N-1\}.
\end{align*}
Using the fact that inner products are not changed by the input-independent unitary that follows the application of $U$, we have for all $j\in\{1,\ldots,N-1\}$
\begin{equation}\label{eqn:ipafter}
    \braket{\psi_0^{T+1}}{\psi_j^{T+1}}=\sum_{k\in\{0,\ldots,N-1\}\backslash\{j\}} \alpha_{0k}\alpha_{jk}\braket{w_{0k}}{w_{jk}}~~+~~e^{i\delta}\alpha_{0j}\alpha_{jj}\braket{w_{0j}}{w_{jj}}.
\end{equation}
Using Equations~\eqref{eqn:ipbefore} and~\eqref{eqn:ipafter},
\begin{align*}    |\braket{\psi_{0}^T}{\psi_j^T}| - |\braket{\psi_{0}^{T + 1}}{\psi_j^{T + 1}}| & \leq |\braket{\psi_{0}^T}{\psi_j^T} - \braket{\psi_{0}^{T + 1}}{\psi_j^{T + 1}}|\\
    & = |(1 - e^{i\delta})\alpha_{0j}\alpha_{jj}\braket{w_{0j}}{w_{jj}}|\\    & = |1 - e^{i\delta}|\cdot|\alpha_{0j}|\cdot|\alpha_{jj}|\cdot |\braket{w_{0j}}{w_{jj}}|\\
    & \leq |1 - e^{i\delta}|\cdot|\alpha_{0j}|,
\end{align*}
where the first inequality uses the triangle inequality and the last inequality holds because $|\alpha_{jj}|\cdot |\braket{w_{0j}}{w_{jj}}| \leq 1$ for all $j, T$.
Summing over all $j \in \cbra{1, \dots, N-1}$, using the Cauchy-Schwarz inequality and the fact that $\sum_{j=1}^{N-1}|\alpha_{0j}|^2\leq 1$, we can bound the change in the progress measure in one step by:
\begin{equation}\label{eqn: progress change}
P(T) - P(T+1)=\sum_{j=1}^{N-1} |\braket{\psi_{0}^T}{\psi_j^T}| - |\braket{\psi_{0}^{T + 1}}{\psi_j^{T + 1}}|\leq \sum_{j = 1}^{N - 1} |1 - e^{i\delta}|\cdot |\alpha_{0j}| \leq |1 - e^{i\delta}| \sqrt{N-1}.
\end{equation}
By Equations~\eqref{eqn: initial measure},~\eqref{eqn: final measure},~\eqref{eqn: progress change} and a telescoping sum, we have
\begin{equation}\label{eqn:advineq}
(N-1)((1 - \gamma^2)^{t/2} - 0.99) \leq P(0) - P(C) = \sum_{T = 0}^{C-1}(P(T) - P(T+1))\leq  C|1 - e^{i\delta}|\sqrt{N-1}.
\end{equation}
We have $|1 - e^{i\delta}|=2\sin(\delta/2)\leq \delta$ for all $\delta\in[0,\pi]$.
As already mentioned, the assumption $t=o(1/\gamma^2)$ implies $(1 - \gamma^2)^{t/2} - 0.99=\Omega(1)$.
Hence we obtain the desired lower bound $C=\Omega(\sqrt{N}/\delta)$ by rearranging Equation~\eqref{eqn:advineq}.
\end{proof}
\end{document}